\DeclareSymbolFont{matha}{OML}{txmi}{m}{it}
\DeclareMathSymbol{\varv}{\mathord}{matha}{118}
\begin{document}
\title{SDN-enabled MIMO Heterogeneous Cooperative Networks with Flexible Cell Association}
\author{Anastasios Papazafeiropoulos, Pandelis Kourtessis, Marco Di Renzo,  John M. Senior, and Symeon Chatzinotas \vspace{2mm} \\
\thanks{A. Papazafeiropoulos is with the Communications and Intelligent Systems Research Group,
University of Hertfordshire, Hatfield, U. K. and with SnT at the University of Luxembourg, Luxembourg. P. Kourtessis, and J.M. Senior are with the  Communications and Intelligent Systems Research Group,
University of Hertfordshire, Hatfield, U. K. M. Di Renzo is with the Laboratoire des Signaux et Syst\`emes, CNRS, CentraleSup\'elec, Universit\'e Paris Sud, Universit\'e Paris-Saclay, France.
E-mails: tapapazaf@gmail.com, \{p.kourtessis,j.m.senior\}@herts.ac.uk),marco.direnzo@l2s.centralesupelec.fr, symeon.chatzinotas@uni.lu.
} \thanks{This work was supported in part by FNR, Luxembourg, through the CORE project ECLECTIC.}}
\maketitle
\vspace{-12 mm}

\begin{abstract}
 Small-cell densification is a strategy enabling the offloading of users from macro base stations (MBSs), in order to alleviate their load and increase the coverage, especially, for cell-edge users. In parallel, as the network increases in density, the BS cooperation emerges as an efficient design method towards the demands for drastic improvement of the system performance against the detrimental overall interference.  We, therefore, model and scrutinize a heterogeneous network (HetNet) of two tiers (macro and small cells) with multiple-antenna BSs serving a multitude of users, which differ with respect to their basic design parameters, e.g., the deployment density, the number of transmit antennas, and transmit power. In addition, the tiers are enhanced with cell association policies by introducing the concept of the association probability. Above this and motivated by the advantages of cooperation among BSs, the small base stations (SBSs) are enriched with this property in their design. SBS cooperation allows shedding light into its impact on the  cell selection rules in multi-antenna HetNets. Under these settings, software-defined networking (SDN) is introduced smoothly to play the leading role in the orchestration of the network.   {In particular, heavy  operations such as the coordination and the cell association are undertaken by virtue of an SDN controller performing and managing efficiently the corresponding computations  due to its centralized adaptability and dynamicity  towards the enhancement and potential scalability of the network}. In this context, we derive the coverage probability and the mean achievable rate. Not only we show the outperformance of BS cooperation over uncoordinated BSs, but we also demonstrate that the SBS cooperation enables the admittance of more users from the macro-cell BSs (MBSs). Furthermore, we show that by increasing the number of BS antennas, the system performance is improved as the metrics under study reveal. Moreover, we investigate the performance of different transmission techniques, and we identify the optimal bias in each case when SBSs cooperate. Finally, we depict that the SBS densification is beneficial until a specific density value since a further increase does not increase the coverage probability.
\end{abstract}

 \begin{keywords}
 Multi-antenna heterogeneous networks, small-cell cooperation, offloading, software-defined networking, stochastic geometry.
 \end{keywords}
 
 {\section{Introduction}
The emerging fifth-generation (5G) wireless communication networks aim at an exponential growth in data rates, roughly $1000$ times of the 4G systems~\cite{Andrews2014,Ericsson2015}. In particular, among the promising technologies, the concept of heterogeneous networks (HetNets) concerns low-power nodes with different characteristics, being randomly located according to a Poisson point process (PPP)~\cite{Andrews2013,Kamel2016}. In such networks, macro and small cells, differing in transmit power, spatial density, and coverage, coexist as different tiers. Lots of research in academia and industry has been devoted to this area (see~\cite{Andrews2012} and reference therein), and standardization activities have a long time now been initiated in 3GPP~\cite{Damnjanovic2011,Sankaran2012}. }

 {Although the beginning of the research in HetNets involved single-antenna base stations (BSs)~\cite{Andrews2011,Dhillon2012}, current studies have considered the symbiosis and synergy of HetNets and multi-antenna techniques as the literature reveals~\cite{Dhillon2013,Li2016,DiRenzo2016,Papazafeiropoulos2018}. In this direction, powerful tools from stochastic geometry have allowed the tractable characterization of even multi-user multiple-input multiple-output (MU-MIMO) HetNets~\cite{Dhillon2013,PapazafComLetter2016,Papazafeiropoulos2017,Papazafeiropoulos2018}. One of the main observations is that single user beamforming (SU-BF) may result in better coverage on the downlink than multi-user (MU) beamforming such as space division multiple access (SDMA) in the case of perfect channel state information (CSI). In this direction, the inevitable realistic effect of channel uncertainty has been studied in depth in terms of quantized CSI and imperfect CSI due to pilot contamination, e.g., see~\cite{Kountouris2012} and~\cite{Papazafeiropoulos2017}, respectively. In addition, contributions have been noted by analyzing the impact of hardware impairments and channel aging~\cite{Papazafeiropoulos2017}.}

 {Focusing further on HetNets, we observe that, in practice, load unbalances take place. Among the major sources of this effect is the variation in 
transmit powers. Hence, the coexistence of macrocell BSs (MBSs) and small base stations (SBSs), serving different numbers of users, is unavoidable~\cite{Jo2012,Gupta2014,Andrews2014a,Liu2016}. Actually, in such case, the application of SBSs enables the offloading of users from the MBSs, which results in the improvement of the quality of service (QoS) of the network. Notably, the offloading, described by the association probability, can be achieved by introducing an artificial bias to expand the range of SBSs. The prevailing strategy to obtain the desired bias is the maximization of the coverage. In particular, the authors in~\cite{Jo2012} came to the conclusion that the user association with the BS providing the largest received power to achieve maximization of the signal-to-interference-plus-noise ratio (SINR) is not the right criterion for a multi-antenna HetNet as in HetNets with single-antenna BSs. Instead, the selection rule includes the addition of an appropriate bias regarding the received power to achieve the maximum coverage. }

 {Among the key technologies, addressing the problem of interference is coordinated multi-point transmission (CoMP)~\cite{Simeone2009,Gesbert2010,Irmer2011,AlHaija2017}. In fact, CoMP achieves higher spectral efficiency and coverage of the cell-edge users by exploiting or even mitigating the interference when cooperation among the BSs takes place. Especially, in~\cite{Gesbert2010}, BS cooperation was incorporated with multi-antenna processing principles to combine their benefits and open new research avenues. Moreover, in~\cite{AlHaija2017}, SBSs cooperation was considered during the uplink transmission of a HetNet to derive its spectral efficiency.}

 {5G and Internet of Things (IoT) design works involve complex data management, where the synergism of different technologies is not only indicated but also necessitated. Inevitably, HetNets design, being among the main elements of 5G networks, implicates the interconnection of different interfaces and protocols~\cite{Xia2015a,Haque2016,Bera2017}. The increasing network complexity calls for a shift from a hardware-based approach to a software-based avenue~\cite{Ameigeiras2015}. A promising solution, termed software-defined networking (SDN), has been proposed to cover the arising gap~\cite{Nunes2014,Arslan2015,Sagar2016,Rawat2017}. Its name implies that the network functions are software based. Actually, the functions are manipulated by means of a central controller that decouples the control and data planes, and thus, it controls easier and more efficiently the network, e.g., SDN contributes to higher performance when BS cooperation is implemented. Furthermore,~\cite{Kitindi2017} has considered SDN and centralized radio access network (C-RAN) to realize wireless network virtualization (WNV). Another example is~\cite{Han2016b}, where the authors proposed a traffic load balancing framework striving for a balance between network utilities by means of an SDN establishment\footnote{ {It is worthwhile to mention that another promising technique, dealing with the increasing wireless traffic, is spectrum sharing~\cite{Zhang2017}.}}.}

 {\subsection{Motivation-Central Idea}
This paper is motivated by the following observations:~1) HetNet design provides a more realistic system evaluation, 2) by equipping the BSs with multiple antennas, the capability for application of different transmission techniques with numerous advantages is enabled, 3) BS cooperation brings gains to the spectral efficiency and cell-edge user coverage, 4) it is indicated that HetNets and MIMO coexist and complement each other, 5) macro-cell networks usually need traffic relief and seek SBSs for offloading multiple users, and 6) SDN is able to provide centralized manipulation tasks such as the control of the exchange of the load information and its optimization.}

 {These observations suggest that leveraging the SDN architecture will be advantageous towards a more efficient BS cooperation in an MU-MIMO HetNet enabled with flexible cell association (offloading) policies, i.e., when SBSs are able to reduce the load of an MBS. In addition, SDN offers a platform to enable the SBS cooperation allowing even more users to be offloaded from the MBS. Evidently, this framework presents numerous benefits such as better communication quality, especially for the cell-edge users, and even reduction of the energy consumption since the MBS will consume less energy due to offloading and the whole system will be managed more efficiently by means of SDN. }

 {\subsection{Contributions }
The main contributions are summarized as follows.}
\begin{itemize}
\item  {We present a novel analytical model for the downlink of MU-MIMO HetNet enriched offloading and BS cooperation properties, and orchestrated by an SDN controller.  Contrary to existing works, i.e.,~\cite{Gupta2014} as well as~\cite{Han2016}, we assume BS cooperation and the introduction of SDN as well as MU-MIMO transmission, respectively. Note that the analysis and description are not trivial since the association with a BS in MIMO HetNets demands the maximization of the SINR for coverage maximization, while in single-antenna HetNets  the maximization of the received power is considered sufficient (if no extra bias is introduced)\footnote{In general, MIMO HetNets can exploit the spatial dimension to create non-uniform coverage areas. In such case, a cell association bias can be used to favour SBSs, but in our analysis we assume zero bias for simplicity. }. Especially, the analysis in multi-antenna HetNets concerns complex algebraic manipulations including the Laplace transform of the interference and calculation of its derivative.}
\begin{itemize}
\item   {We derive the probabilities that a user is associated with an MBS, an SBS, and the SBS cluster, respectively. Moreover, in each case, we obtain the probabilistic distance to the tagged BS/BS cluster for a specific bias.}

\item   {We derive the coverage probability and the mean achievable rate of an MU-MIMO HetNet with SBS cooperation and offloading function implementable by an SDN controller. For the sake of comparison, we also present and illustrate the results corresponding to noncooperative BSs.}
\end{itemize}

 \item  {We shed light on the impact of the system parameters and transmission methods on the downlink coverage probability and mean achievable rate, and we make comparisons between the cooperative and noncooperative scenarios as well as between the single and multiple-antenna BSs. Actually, the BS cooperation is beneficiary in all cases. Among the results, we observe that:}
 \begin{itemize}
 \item  {single-user beamforming (SUBF) is preferable with respect to SDMA and single-input single-output (SISO) transmissions even when BSs cooperate because of less interference and the beamforming gain, respectively.}
 \item  {In noncooperation, the coverages yielded by the MBS and the SBS are identical if the various parameters are the same. However, if the SBSs cooperate, the provided coverage is considerably enhanced with respect to the MBS capabilities.}
 \item  {The optimal bias of SBSs cooperation when SDMA and SUBF are implemented moves to the left since the SBS cluster offers higher transmit power. Furthermore, if we increase the optimal bias of SBSs during cooperation we achieve even higher coverage.}
 \item  {Further deployment of BS antennas is beneficial if the number of serving users is kept constant.
 \item Increase of the SBs density results in expanded coverage. Remarkably, an indefinite increase of this density is profitless.}
\end{itemize}
\end{itemize}

The remainder of this paper is structured as follows. Section~\ref{System} presents the basic parameters of the system model of a two-tier HetNet with randomly located BSs  having multiple antennas, and serving multiple users, where SBSs are able to cooperate. In addition, offloading between different classes of BSs is enabled. The section continues with the exposition of the downlink transmission. Section~\ref{main} presents 
the criterion for making the cell selection 
and the association region for each scenario, which defines the area that a typical user is associated with each tier. Next, we present the main results in terms of the coverage probability and mean achievable rate in both noncooperative and cooperative scenarios of MU-MIMO HetNets. The numerical results are placed in Section~\ref{Numerical}, while Section~\ref{Conclusion} summarises the paper.

\textit{Notation:} Vectors and matrices are denoted by boldface lower and upper case symbols. The symbol $(\cdot)^\H$ expresses the Hermitian transpose operator, while the expectation operator is denoted by $\EE\left[\cdot\right]$. The notations $\mathbb{C}^{M \times 1}$ and $\mathbb{C}^{M\times N}$ refer to complex $M$-dimensional vectors and $M\times N$ matrices, respectively. Furthermore, $\mathcal{L}_{I}\!\left(s \right)$ typifies the Laplace transform of $I$. Finally, $\bb \sim \cC\cN{(\b0,\mathbf{\Sigma})}$ represents a circularly symmetric complex Gaussian variable with zero-mean and covariance matrix $\mathbf{\Sigma}$.

\section{System Model}\label{System}
This section presents the formulation of the downlink design and the corresponding SINR of a software-defined HetNet embodying the principles of MU-MIMO transmission. The software-defined features allow the separation of control and data planes to improve the manageability and adaptability of the network. Specifically, we consider two independent tiers corresponding to a macro cell BS network and a network of small cells with BSs being overlaid in different frequency bands\footnote{ {Without loss of generality, the selection of a two-tier network is decided for the sake of exposition of the results extracted from the BS cooperation.}}. In fact, the locations of the BSs in each tier form realizations of the independent homogeneous PPPs $\Phi_{m}$ and $\Phi_{s}$ with densities $\lambda_{m}$ and $\lambda_{s}$, respectively. We use the index $j=\{m,s\}$ to refer by means of $m$ and $s$ to the macro cell and small cell, respectively. Reasonably, both tiers present different other characteristics, i.e., they differ in terms of the transmit power per user $p_{j}$, the number of BS transmit antennas $M_{j}$, the number of users served in each resource block $\Psi_{j}$, the biasing factors, the transmission scheme, and the path-loss exponent $\al_{j}>2$. Moreover, we assume several degrees of freedoms in every cell, which means that the number of BS antennas $M_{j}$ is at least greater than its associated users $ \Psi_{j}$, i.e., $M_{j}\ge \Psi_{j}$\footnote{Without any loss of generality, we assume that the parameters are global in each tier, while different tiers are defined by different parameters. For example, all the SBSs have the same number of antennas $M_{s}$, while $M_{s}\ne M_{m}$.}. Note that the users in both tiers are assumed to be equipped with a single antenna. Obviously, the density of SBSs is higher and they emit lower power. 

\begin{figure}[!h]
 \begin{center}
 \includegraphics[width=0.8\linewidth]{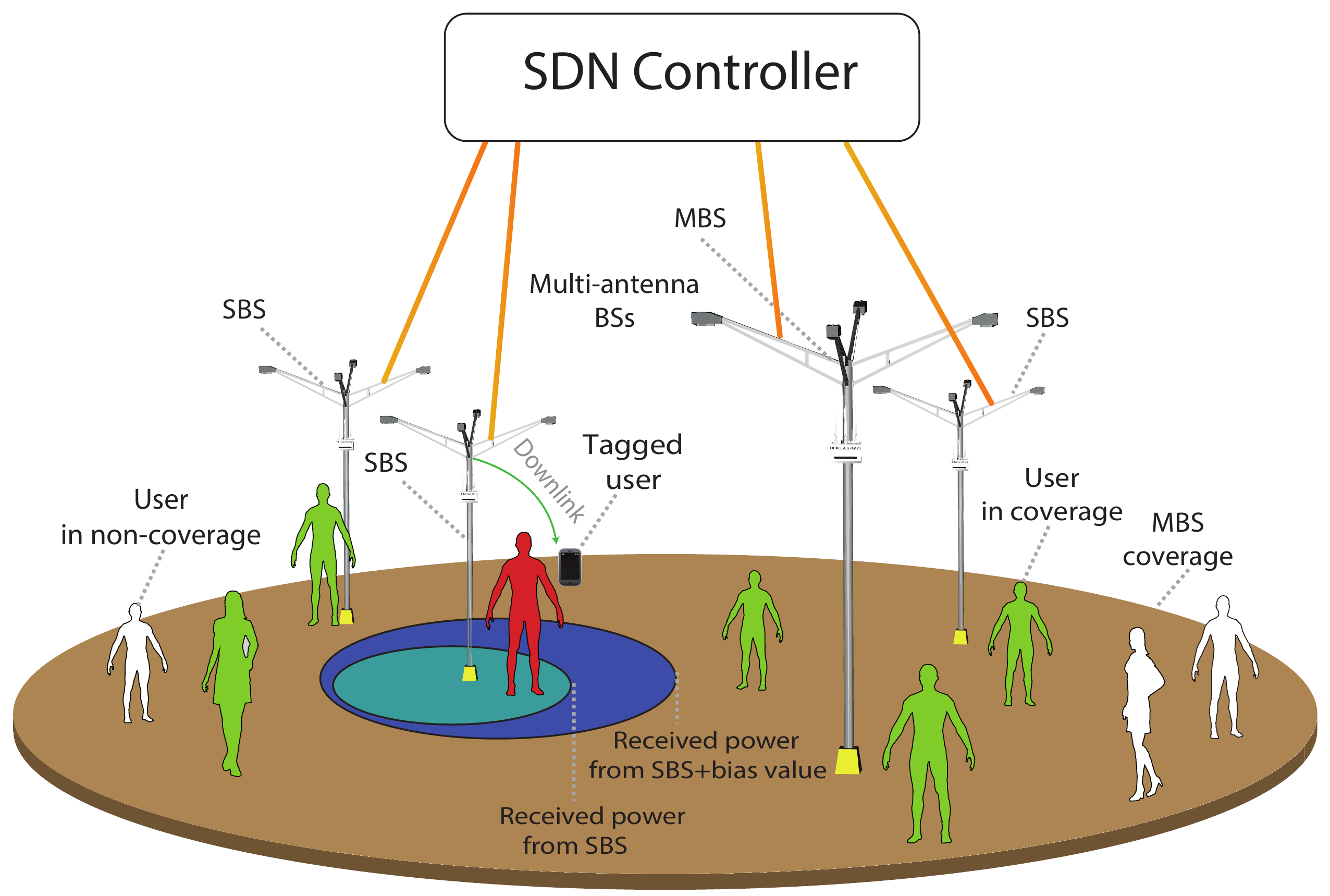}
 \caption{\footnotesize{A two-tier MIMO HetNet, consisted of MBSs and SBSs, and orchestrated by a central SDN controller. All BSs are employed with multiple antennas serving simultaneously several users. SBSs, appearing higher density than the MBSs, can cooperate and are enhanced with cell-association policies. The control of these operations is managed by the SDN controller. The red user represents the typical user, while the green and white users represent customers in coverage and non-coverage, respectively.}}
 \label{scenario}
 \end{center}
 \end{figure}

Based on recent advancements relied on the benefits of SDN~\cite{Kitindi2017}, we employ an SDN controller connected with the BSs of both tiers by means of wired backhaul.  {This controller is programmed to learn the status of each individual network element as well as the physical topology by means of appropriate discovery techniques and databases~\cite{Tanbourgi2014}. As a novel technology, its aims are the reduction of the complexity of 5G networks as well as their more efficient deployment and maintenance. In fact, SDN outperforms non-SDN based networks and its  important role is well-established as shown in~\cite{Qiao2013,Nie2014}. Enabling SDN in HetNets is a promising way and shows great potential for network optimization but certain changes and extensions to the  controller have to be considered~\cite{Arslan2015}. After all, the purpose of applying SDN is its inherited advantageous properties which result in the effective reduction of the exchanged information among the BSs and the reduction of the backhaul power consumption. Notably, the reduction in terms of the exchanged information implies low latency, which is largely desired in future emerging networks.}

  {In this direction, SDoff, being an SBS offloading control mechanism, was proposed in~\cite{Arslan2014}, and it could be considered herein to orchestrate the offloading. In order to standardize the communication between the data plane and control plane, the configuration of all connections relies on the application of the OpenFlow protocol~\cite{Chen2016}. Specifically, the BSs send their state information to the SDN controller, which transmits the control information back to the BSs. Note that this control information includes among others the management of the  cooperation among the SBSs. In particular, we consider measurement flows and control flows similar to~\cite{Li2015a}.}

A significant characteristic of this model is that a user can connect with both tiers. A second meaningful attribute, as it will be shown below, is that a user is able to connect to several SBSs simultaneously. In other words, we assume cooperation among these cells. In fact, the cooperation benefits more when the cells are getting closer, i.e., during the densification of the SBSs that aiming at meeting the increasing traffic demands in 5G networks. The duty-function of the SDN controller is the coordination of the cooperation among the SBSs and the decoupling of the transmission from the processing to attain practically the cooperation.  {As a cooperation model,  we employ joint transmission (JT), which involves the simultaneous  data transmission from multiple coordinated SBSs with appropriate beamforming weights. The choice relies on the fact that JT generally achieves larger performance benefits than  coordinated scheduling and coordinated beamforming (CS/CB), but with larger backhaul overheads which is affordable in our case due to the compensation made by the benefits of the introduced SDN ~\cite{Lee2012}\footnote{SDN is an enabler that facilitates efficient load balancing of the backhaul links. This applies both to data links and collecting CSI. In terms of CSI, both JT and CB schemes require the exchange of CSI among the BSs. In terms of data, the user data have to be transfered to each cooperating BS in JT, while only to a single serving station in CB. Note that SDN ensures that the user bits are directed only to the appropriate BSs without wasting backhaul capacity even in highly dynamic environments where BS cooperation and user associates changes rapidly over time. Actually, in terms of CSI, it can efficiently direct the CSI measurements to the appropriate signal processors, whether they are centralized or distributed.  Hence, JT requires higher capacity backhaul links than CS/CB. This extra burden can be alleviated by SDN.}.}

Basically, we focus on the downlink scenario of communication between a BS and the associated user. Moreover, we assume no intra-cell interference by using orthogonal frequency division multiple access (OFDMA) transmission. However, we consider the interference from other BSs in both tiers.  {Exploiting Slivnyak's theorem, we are able to conduct the analysis by focusing on a typical user, being  a user chosen at random from amongst all users in the network~\cite{Chiu2013a}. Without loss
of generality, we assume that the typical user is located at the origin.} Hence, $x_{i,j}$ and $r_{i,j}$ denote the position and the distance of the $i$th BS in tier $j$ having as reference the typical user. A plausible scenario is shown in Fig.~\ref{scenario}, where a single multi-antenna MBS is surrounded by several SBSs.

\subsection{Downlink Transmission}\label{downlink} 
Herein, we provide a statistical description of the SINR necessitating first to model the downlink transmission. Notably, we assume a general model considering an MU-MIMO architecture, which allows applying a variety of transmission methods from SISO to SDMA. Specifically, we assume that $\bs_{i,j} \in \mathbb{C}^{M_{j} \times 1}$ is the normalized transmit signal vector from the $i$th associated BSs at the $j$th tier to the typical user. The channel vector between the $i$th BS of the $j$th tier and the typical user, located at $r_{i,j}\in \mathbb{R}^{2}$,  is described by $\bff_{i,j} \in \mathbb{C}^{M_{j}\times 1}$. Thus, the received signal at the typical user, found at the $j$th tier, is written as
\begin{align}
 y_{j}&\!=\!\!\sum_{x_{i,j}\in\mathcal{B}}\!\!r_{i,j}^{-\al_{j}/2}\sqrt{p_{j}}\bff_{i,j}^{\H} \bs_{i,j}+\!\!\sum_{x_{i,j}\not\in\mathcal{B}}\!\!r_{i,j}^{-\al_{j}/2}\sqrt{p_{j}}\bff_{i,j}^{\H} \bs_{i,j}+z_{j}\label{signal},
\end{align}
where $ \mathcal{B}$ represents the set of associated BSs with the typical user. Remarkably, this set may include an MBS, an SBS or an SBS cluster.  {In other words, we assume intra-tier cooperation taking place only among the SBSs because we focus on the benefits enjoyed from the implementation of this cooperation. Note that an MBS cooperation is not considered since it has  no practical advantage, and could not take place because between two MBSs normally exist one or more SBSs.} Also, $z_{j}$ is the  {additive white Gaussian noise (AWGN)} with zero mean and power $\sigma_{j}^{2}$. From the physical point of view, the first sum describes the desired part from the associated BS/BSs, while the second sum expresses the interfering part from the rest of the BSs located in both tiers.  {As shown, the MBS and the SBSs share the same spectrum, i.e., the typical user suffers from the interferences coming from other BSs in both tiers.}
\textbf{}
For reasons of tractability and without any loss of generality, we employ in our analysis zero-forcing (ZF) precoding, while more general precoders are left for future work.  {Our analysis enables the application of different transmission methods, e.g., SISO, SUBF, and SDMA. Hence, in Section~V, we perform a comparison among these techniques that exposes the corresponding advantages of each one, and accrediting the optimal strategy under certain parameters.}

 {The deployment of a wired backhaul is generally hard. . In practice, the available CSI in the SDN controller is  imperfect due to inaccurate channel estimation and measurements~\cite{Wang2016 }. Another factor can be the transmission delay (lag) during the CSI delivery.  The current work could be extended to address a comparison between a wireless and a wired backhaul but this is out of the current scope of this work, which is to shed light on the impact of multi-antenna SBS cooperation while the design is facilitated by means of SDN implemented basically like a black box. However, the impact of the links consisting the backhaul is an interesting area left for future work.}

In general, the channel power distribution of a link depends upon its physical representation. For example, it changes if we refer to the desired or the interference part of the received signal, if the BS deploys single or multiple antennas, and the transmission directs to single or multiple users. In the common case of frequency-flat Rayleigh fading with ZF precoding and perfect CSI, the channel power distributions of both direct and interfering links, denoted by $h_{i,j}$ and $g_{i,j}$ follow the Gamma distribution as proved in~\cite{Dhillon2013}\footnote{The practical case of imperfect CSIT can be studied by extending our analysis similar to~\cite{Papazafeiropoulos2017}.}. In detail, we have that $h_{{i,j}}\sim\Gamma (\Delta_{j},1)$ and $g_{i,j}\sim\Gamma (\Psi_{j},1)$, where $\Delta_{j}=M_{j}-\Psi_{j}+1$. The instantaneous received signal power at the typical user from the $i$th BS at the $j$th tier is
\begin{align}
 p_{\mathrm{r}}(x_{i,j})=p_{j}h_{i,j}\|x_{i,j}\|^{-\al_{j}}
\end{align}
with the average received power given by
\begin{align}
 \bar{p}_{\mathrm{r}}(x_{i,j})=p_{j}\Delta_{j}\|x_{i,j}\|^{-\al_{j}}\label{meanPower} .
\end{align}
 {\begin{remark}\label{remarkaveragePower} 
Generally, during the data transmisison phase, the typical user measures the received power or quality of reference signals. According to~\eqref{meanPower}, in the case of SISO, SUBF, and multi-user transmissions, we obtain $\Delta_{j}=1$, $\Delta_{j}=M_{j}$, and $\Delta_{j}=M_{j}-\Psi_{j}+1$. Obviously, the average  received signal power is higher when SUBF is applied, while the MU setting follows. Note that in the special case of SDMA, $\Delta_{j}$ is equal to $1$, i.e., the variable  $\Delta_{j}$ is identical for both transmission techniques. However, further observations regarding the suitable selection of the transmission technique are provided below by investigating the cell-selection rule.
\end{remark}}

\subsection{SINR}
Herein, we present the downlink SINR for transmission at the typical user when different association strategies are applied. Specifically, 
the SINR of the downlink transmission from the associated MBS, SBS or SBS cluster to the typical user is described by
\begin{align}
 \gamma_{j}=\frac{\sum_{x_{i,j}\in\mathcal{B}}p_{j} h_{i,j} r_{i,j}^{-\al_{j}}}{\sum_{x_{i,j}\not \in\mathcal{B}}p_{j} g_{i,j} r_{i,j}^{-\al_{j}}+\sigma_{j}^{2}}.\label{eq:SINR} 
\end{align}

\section{Main Results}\label{main}
This section starts with the presentation of the cell selection rules  the description of the association region. Next, it includes the presentation of the coverage probability and the achievable rate of the typical user in terms of theorems when BS coordination and offloading, achievable in practice by the implementation of SDN, are taken into account. It is noteworthy to add that both MU-MIMO and HetNets result in spatial diversity. Even on this ground, the role of SDN is crucial since it can undertake the management of relevant spatial domain data.

\subsection{Cell Selection}
In a practical HetNet, especially in congested areas, the MBS can be easily heavy loaded with undesired consequences in the performance of the network in terms of quality and energy consumption. Herein, the SBSs can play a significant and beneficial role by alleviating the traffic burden of the MBS. Specifically, SBSs are able to offload users from the congested MBS by applying the cell range expansion technique~\cite{Jo2012,Gupta2014}. Especially, the cell range expansion is an efficient method, where users are offloaded  by means of biasing, in order to increase coverage. In such a case, it is shown that both the coverage and the communication quality are improved, while the macrocell BS reduces its energy consumption  since fewer users need to be served. In addition, SBSs consume lower power in general. Notably, the centralized manipulation by SDN will allow addressing the offloading mechanism. On top of this, network function virtualization (NFV) can be introduced to support the SDN by decoupling the network functions from the hardware~\cite{Li2015}.

Given the need for perpetual connection, we assume open access, where normally a user would be associated with a BS belonging to any tier according to the maximum received-signal-strength criterion. In other words, the user will select as its serving BS the station providing the highest power (the nearest BS) regardless if it is a macro cell or an SBS. However, according to~\cite{Gupta2014}, a practical and general option for cell selection in MU-MIMO HetNets, working optimally in a wide range of systems parameters, is proved to be obtained by means of the maximization of the SINR conditioned on the point process. The maximization can result by adjusting the bias $B_{l}$ values of the so-called biased received power $B_{l} \bar{p}_{\mathrm{r}}(x_{l})$. The corresponding criterion for the selection of a BS from the $l$th tier by the typical user is expressed by means of~\eqref{meanPower} as~\cite{Gupta2014}
\begin{align}
 j=\arg \max_{l\in\{m,s\}} B_{l} \Delta_{l}p_{l}\|x_{i,l}\|^{-\al_{l}},\label{Selection1} 
\end{align}
where $B_{l}=\sqrt{\Psi_{l}/\Delta_{l}}$ is the cell bias selection value\footnote{Note that this selection bias is optimal for coverage maximization, while this does not hold necessarily for other metrics such as the achievable rate.}. Reasonably, the probability that a typical user is associated with a tier depends on the chosen cell association. 
 {\begin{remark}
While selecting the optimal transmission technique, a suitable criterion for the application is not the average received power but~\eqref{Selection1} which depends on the cell bias factor. Herein, the important choice is made based on the product $B_{l} \Delta_{l}$. For instance, among MU-MIMO and SUBF, the former is  preferable since the corresponding product is higher in such case. Furthermore, if SDMA is considered, we have that $B_{l} \Delta_{l}=\sqrt{\Psi_{l}}$, being higher than $1$ provided by SISO transmission. Note that Remark~\ref{remarkaveragePower} verified that~\eqref{meanPower} is not suitable for the cell-selection in MU-MIMO architectures, while~\eqref{Selection1} comes to agreement with known results, e.g.,~\cite{Dhillon2013}.
\end{remark}}

Notably, this work focuses on the study of the association between the typical user and the MBS or the SBS cluster under an MU-MIMO design architecture. Hence, the selection criterion is formulated by the following proposition.
\begin{proposition}
In a multi-antenna SDN HetNet, the typical user selects the $j$th
tier obtained by
\begin{align}
 \!\!j\!=\!\arg \max \!\left( \!\!B_{m} \Delta_{m}p_{m}\|x_{m}\|^{-\al_{m}},\! { \sum_{x_{i,s}\in\mathcal{C}}^{x_{K,s}}\!p_{s}\Delta_{s}r_{i,s}^{-\al_{s}}B_{s}}\!\right)\!\!,\label{Selection} 
\end{align}
where $K$ is the number of SBSs in the cluster.
\end{proposition}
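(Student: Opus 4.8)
The plan is to read off the two candidate biased-received-power quantities that enter the general selection rule~\eqref{Selection1} and then specialise the small-cell candidate to the cooperative (cluster) case. First I would recall that~\eqref{Selection1} selects the tier maximising the biased average received power $B_l\bar{p}_{\mathrm{r}}(x_l)$, where by~\eqref{meanPower} the single-BS contribution of tier $l$ is $B_l\bar{p}_{\mathrm{r}}(x_l)=B_l\Delta_l p_l\|x_l\|^{-\al_l}$. Thus it suffices to identify, for each of the two admissible serving configurations---a lone MBS versus the SBS cluster---the corresponding biased average received power, and then take the maximum of the two.

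Since no MBS cooperation is assumed, the macro candidate is a single (nearest) MBS, whose biased average received power is exactly $B_m\Delta_m p_m\|x_m\|^{-\al_m}$, the first argument of~\eqref{Selection}. For the small-cell candidate I would invoke the joint-transmission model: the desired signal at the typical user is the superposition of the contributions of the $K$ cooperating SBSs in $\mathcal{C}$, so that the instantaneous desired power is the numerator of the SINR in~\eqref{eq:SINR}, namely $\sum_{x_{i,s}\in\mathcal{C}}p_s h_{i,s} r_{i,s}^{-\al_s}$. Taking expectations and using $h_{i,s}\sim\Gamma(\Delta_s,1)$, which has mean $\Delta_s$, yields the cluster's average received power $\sum_{x_{i,s}\in\mathcal{C}}p_s\Delta_s r_{i,s}^{-\al_s}$.

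Finally I would apply the small-cell bias $B_s$. Because the biasing factors are tier-global (the same $B_s=\sqrt{\Psi_s/\Delta_s}$ for every SBS), $B_s$ factors through the sum, so the cluster's biased average received power becomes $\sum_{x_{i,s}\in\mathcal{C}}p_s\Delta_s r_{i,s}^{-\al_s}B_s$, the second argument of~\eqref{Selection}; substituting both candidates into $\arg\max$ gives the claim. The main obstacle I anticipate is justifying the two modelling steps that make the cluster power additive: first, that under JT the desired signal powers of the cooperating SBSs aggregate across the cluster rather than combining through uncontrolled random phases, and second, that averaging the per-link Gamma gains commutes with the biasing so that a single common factor $B_s$ may be pulled outside the sum. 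Both are resolved by appealing to the JT assumption introduced in Section~\ref{System} and to the convention that the parameters $p_s$, $\Delta_s$, and $\Psi_s$ (hence $B_s$) are identical across the small-cell tier.
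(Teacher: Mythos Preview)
Your proposal is correct and follows exactly the line the paper indicates: the paper's own proof is simply ``The proof follows similar lines to the selection rule (noncooperative scenario) given by~\eqref{Selection1} and derived in~\cite{Gupta2014}. Hence, it is omitted to avoid repetition.'' You have merely made explicit what the paper leaves implicit---identifying the two biased average received powers (single MBS versus JT-aggregated SBS cluster), invoking $\EE[h_{i,s}]=\Delta_s$ from the Gamma distribution, and using the tier-global parameters to pull $B_s$ through the sum---so your argument is a faithful elaboration of the paper's one-line reference rather than a different route.
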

\proof The proof follows similar lines to the selection rule (noncooperative scenario)  given by~\eqref{Selection1} and derived in~\cite{Gupta2014}. Hence, it is omitted to avoid repetition. \endproof
According to this proposition, if $j=m$, the typical user connects with the MBS, otherwise the SBS cluster is selected. Differently to~\cite{Han2016},~\eqref{Selection} describes the more general and practical scenario of multi-antenna BSs serving multiple users.   { Proposition 1 shows that the selection of the SBS cluster is favorable with comparison to the MBS when SBS cooperation takes place in most cases. Of course, this does not reduce the value and importance of the MBS during the network design. The coexistence of an MBS with SBSs is a well-established network architecture and widely accepted in practice by vendors that relies on already known benefits provided across the literature~\cite{Andrews2013,Dhillon2012,Jo2012,Gupta2014,Andrews2014a,Liu2016}. In some cases, according to this Proposition, the received biased power by the MBS might  be higher, and the typical user should be associated with this BS.}
\begin{remark}
Clearly, the modification of the load of each tier, demands the adjustment of the transmit power, the BS density, the number of the cooperative SBSs, the bias value, and the number of BS antennas as well as the number of serving users. Thus, many degrees of freedom are available during the system design. Adjusting the number of antennas per SBS looks a reasonable design in terms of cost and complexity\footnote{Increasing the number of cooperative SBSs is obviously advantageous since the interference turns into a useful signal. However,  {it is not feasible in practice for the users to communicate with many SBSs simultaneously, i.e., the number of cooperative SBS is a design variable that it should take reasonably low values, otherwise, the exchange of the data load among the SBSs can be easily prohibitive. Moreover, before deciding this number,  the corresponding implementation cost has to be taken into consideration and meet the network overall specifications.}}. Also,  the variation in the number of cooperative SBs would be a very efficient solution. Remarkably, simulations show that, in the most cases, the typical user selects the SBS cluster because it provides higher received power.
\end{remark}


\subsection{Association Region}
The confines, defining the association region of the typical user, are formed based on the selection rule described by~\eqref{Selection}. Actually, since this region encloses the SBS cluster or the macro BS, it constitutes a riddance region for the interfering MBS or the SBS cluster, respectively, which is described by the following proposition.
\begin{proposition}\label{PropRegion} 
When the typical user selects the cooperative cluster, the MBS is found at a distance obeying to the condition
\begin{align}
r_{m}&>\left( \sum_{x_{i,s}\in\mathcal{C}}^{x_{K,s}}\hat{p}_{s}\hat{\Delta}_{s}\hat{B}_{s}r_{i,s}^{-\al_{s}} \right)^{\frac{1}{{\al}_{m}}},\label{prop1} 
\end{align}
where $r_{i,s}$ expresses the distance of the $j$th closest SBS of the cluster to the typical user, $\hat{p}_{s}={p}_{s}/p_{m}$, $\hat{\Delta}_{s}={\Delta}_{s}/\Delta_{m}$, and $\hat{B}_{s}={B}_{s}/B_{m}$.

\end{proposition}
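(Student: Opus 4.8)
The plan is to read the condition straight off the cell-selection rule of Proposition~1 and then invert the power-law dependence on the MBS distance; no probabilistic argument is needed, since both the cluster distances $\{r_{i,s}\}$ and the MBS position $x_m$ are fixed once we condition on the point process. By definition, the typical user is served by the cooperative cluster exactly when the second argument of the $\arg\max$ in~\eqref{Selection} exceeds the first, i.e. when the aggregate biased received power of the $K$ cooperating SBSs dominates that of the candidate MBS. Writing $\|x_m\|=r_m$, this is the single scalar inequality
\begin{align}
\sum_{x_{i,s}\in\mathcal{C}}^{x_{K,s}} B_s p_s \Delta_s\, r_{i,s}^{-\al_s} \;>\; B_m p_m \Delta_m\, r_m^{-\al_m}. \nonumber
\end{align}

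Next I would divide both sides by the MBS prefactor $B_m p_m \Delta_m$ and absorb the three tier ratios into the reduced constants $\hat p_s=p_s/p_m$, $\hat\Delta_s=\Delta_s/\Delta_m$, and $\hat B_s=B_s/B_m$, so that the inequality collapses to $r_m^{-\al_m} < \sum_{x_{i,s}\in\mathcal{C}} \hat p_s \hat\Delta_s \hat B_s\, r_{i,s}^{-\al_s}$. The cluster sum simply rides through this rearrangement unchanged, so the cooperative case needs no new idea beyond the noncooperative single-BS selection of~\eqref{Selection1}: cooperation merely replaces one received-power term by the cluster aggregate, and the tier-ratio bookkeeping is the only routine work.

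The one step that genuinely requires care — and the place where a slip is easiest — is the final inversion. Since $\al_m>2>0$, the map $r\mapsto r^{-\al_m}$ is strictly decreasing on $(0,\infty)$, so converting the bound on $r_m^{-\al_m}$ into a bound on $r_m$ reverses the inequality, and one must track the sign of the resulting exponent carefully to read off the threshold radius correctly. This yields the stated lower bound on $r_m$ and exhibits the interior of the corresponding ball centred at the origin as the riddance (exclusion) region from which the interfering MBS is barred whenever the cluster is the serving set. I would close by observing that the argument is entirely symmetric: swapping the two arguments of the $\arg\max$ gives the analogous exclusion condition for the cluster when the MBS is instead the serving BS, which is precisely why it suffices to state a single threshold on $r_m$.
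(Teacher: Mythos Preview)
Your approach is correct and essentially identical to the paper's: the proof in Appendix~\ref{Alpha} consists of exactly the three steps you outline --- write the cluster-versus-MBS biased-power inequality from the selection rule, divide through by $B_{m}p_{m}\Delta_{m}$ to introduce the hatted ratios, and invert the $r_{m}^{-\al_{m}}$ term. Your explicit remark that the strict monotonicity of $r\mapsto r^{-\al_{m}}$ is what flips the inequality is slightly more detailed than the paper, which simply jumps from the divided inequality to the final line without comment.
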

\proof See Appendix~\ref{Alpha}.\endproof

In other words,~\eqref{prop1} describes the area where the MBS is located. Normally, the SBS cluster provides higher biased received power than a single SBS. Thus, the MBS is found further in the cooperative scenario. Henceforth, we make the reasonable assumption that the path-loss in both tiers is identical since the deployment of the SBSs takes place inside the environment of the macro tier.

It is worthwhile to mention that the traffic offloading between the SBS cluster and the MBS can be measured by the associated probability.
\begin{proposition}\label{PropAssoc} 
The probability that the typical user is associated with the SBS cluster is 
\begin{align}
 \mathcal{A}_{\mathrm{SBS_{cl}}}=\int \displaylimits_{\substack{0<r_{1,s}<r_{2,s}\ldots\\r_{K,s}<r_{ \infty,s}}}\mathrm{e}^{ -\lambda_{m}\pi\eta^{2/\al} }f_{\Lambda}\left( \br \right)\mathrm{d}\br, 
\end{align}
where $\eta=1/\left( \hat{p}_{s}\hat{\Delta}_{s} \hat{B}_{s}\displaystyle{\sum_{x_{i,s}\in\mathcal{C}}^{x_{K,s}}}r_{i,s}^{-\hat{\al}} \right)$ and $f_{\Lambda}\left( \br \right)$ is the joint PDF of $r_{i,s}$ with $\br=[r_{1,s},\ldots,r_{K-1,s},r_{K,1}]$ representing the set of distances of the $K$ closest SBSs to the typical user.
\end{proposition}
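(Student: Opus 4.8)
The plan is to turn the association event into a distance condition on the nearest MBS, reduce the resulting conditional probability to a PPP void probability given the cluster geometry, and then average over the cluster distances.

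First I would observe that, by the cell-selection rule~\eqref{Selection}, the typical user is associated with the SBS cluster precisely when the biased received power of the cluster exceeds the largest biased received power offered by the macro tier. Since all MBSs share identical parameters, the macro term $B_m\Delta_mp_m\|x_m\|^{-\al_m}$ is maximized by the nearest MBS, so the comparison involves only $r_m$, the distance from the typical user to its closest MBS. Invoking Proposition~\ref{PropRegion} together with the standing assumption $\al_m=\al_s\eqdef\al$, the association event is exactly
\begin{align}
r_m>\left(\hat{p}_s\hat{\Delta}_s\hat{B}_s\sum_{x_{i,s}\in\mathcal{C}}^{x_{K,s}}r_{i,s}^{-\al}\right)^{1/\al}=\eta^{1/\al},\nn
\end{align}
with $\eta$ exactly as defined in the statement.

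Next I would condition on the cluster configuration, i.e.\ on the ordered distances $\br=[r_{1,s},\dots,r_{K,s}]$ of the $K$ nearest SBSs. Because $\Phi_m$ and $\Phi_s$ are independent homogeneous PPPs, conditioning on the SBS geometry leaves the macro process unaffected and freezes $\eta$ into a deterministic threshold. The event $r_m>\eta^{1/\al}$ is then simply the event that $\Phi_m$ places no point inside the disk of radius $\eta^{1/\al}$ centred at the origin, so by the void probability of a homogeneous PPP of intensity $\lambda_m$ in $\mathbb{R}^2$,
\begin{align}
\P[r_m>\eta^{1/\al}\cond\br]=\exp\!\left(-\lambda_m\pi\bigl(\eta^{1/\al}\bigr)^2\right)=\mathrm{e}^{-\lambda_m\pi\eta^{2/\al}}.\nn
\end{align}

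Finally I would decondition by averaging this expression against the joint density $f_\Lambda(\br)$ of the $K$ nearest SBS distances over the ordered region $0<r_{1,s}<\dots<r_{K,s}<\infty$, which reproduces the claimed integral for $\mathcal{A}_{\mathrm{SBS_{cl}}}$. I expect the only real obstacle to be careful bookkeeping: reducing the macro-tier argmax to the single nearest-MBS distance, and checking that the threshold exponent $1/\al$ squares to $2/\al$ in the void probability so that $\eta$ enters the exponent as $\eta^{2/\al}$. The independence of the two tiers is what makes the conditioning clean; what remains is just identifying $f_\Lambda$ and performing the integration.
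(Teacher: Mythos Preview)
Your proposal is correct and follows essentially the same route as the paper: translate the cluster-association event into the nearest-MBS distance condition $r_m>\eta^{1/\al}$, evaluate the conditional probability via the PPP void probability $\mathrm{e}^{-\lambda_m\pi\eta^{2/\al}}$ using the independence of $\Phi_m$ and $\Phi_s$, and then average against $f_\Lambda(\br)$. The only cosmetic difference is that the paper first writes $\mathcal{A}_{\mathrm{SBS_{cl}}}=1-\mathbb{P}(\text{MBS selected})$ before arriving at the same integral, whereas you work with the cluster-selection event directly.
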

\proof See Appendix~\ref{AlphaAssoc}.\endproof

\begin{lemma}\cite[Lemma~2]{Han2016}
 The joint PDF of $r_{i,s}^{-\hat{\al}}$ is
 \begin{align}
 f_{\Lambda}\left( \br \right)=\left( 2 \pi \lambda_{s} \right)^{K}\mathrm{e}^{-\lambda_{s}\pi r_{i,s}^{2}} {\prod_{x_{i,s}\in\mathcal{C}}^{x_{K,s}}}r_{i,s}.
 \end{align}
\end{lemma}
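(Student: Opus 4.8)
This statement is quoted from \cite{Han2016}, so the plan is simply to reconstruct the standard derivation of the joint law of the $K$ nearest-neighbour distances of a homogeneous PPP. The objects $r_{1,s}<r_{2,s}<\cdots<r_{K,s}$ are the ordered distances from the origin (the typical user) to the points of the homogeneous PPP $\Phi_{s}$ of intensity $\lambda_{s}$ on $\mathbb{R}^{2}$. The whole argument rests on two elementary properties: the number of points falling in any Borel set $A$ is Poisson with mean $\lambda_{s}\lvert A\rvert$, and the counts over disjoint sets are independent. First I would fix an ordered tuple $0<r_{1}<\cdots<r_{K}$ and write the joint density as the infinitesimal probability that the configuration places exactly one point in each thin annulus $\{r_{i}\le \lVert x\rVert\le r_{i}+\mathrm{d}r_{i}\}$ and no point anywhere else inside the disc of radius $r_{K}$.

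Then I would evaluate the two types of factors separately. A thin annulus at radius $r_{i}$ has area $2\pi r_{i}\,\mathrm{d}r_{i}+o(\mathrm{d}r_{i})$, so by the Poisson property the probability of seeing exactly one point there is $2\pi\lambda_{s} r_{i}\,\mathrm{d}r_{i}+o(\mathrm{d}r_{i})$, and the $K$ annuli contribute $\prod_{i=1}^{K}2\pi\lambda_{s} r_{i}\,\mathrm{d}r_{i}$ by independence. The complementary empty region is the disc of radius $r_{K}$ with the $K$ infinitesimal annuli removed; its area is $\pi r_{K}^{2}$ to first order, so the Poisson void probability of that region is $\mathrm{e}^{-\lambda_{s}\pi r_{K}^{2}}$. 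Multiplying the two contributions and dividing by $\mathrm{d}r_{1}\cdots\mathrm{d}r_{K}$ gives
\begin{align}
 f_{\Lambda}(\br)=(2\pi\lambda_{s})^{K}\,\mathrm{e}^{-\lambda_{s}\pi r_{K}^{2}}\prod_{i=1}^{K}r_{i},\nonumber
\end{align}
which is exactly the claimed expression, the exponent carrying the largest radius $r_{K,s}$ since the void region extends out to the $K$th point.

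An equivalent and cleaner route is to map the radial process to the half-line via $t=\pi r^{2}$: under this change of variable the points of $\Phi_{s}$, ordered by distance, become the arrival epochs of a one-dimensional homogeneous Poisson process of rate $\lambda_{s}$, whose first $K$ arrival times carry the well-known joint density $\lambda_{s}^{K}\mathrm{e}^{-\lambda_{s}t_{K}}$ on $\{0<t_{1}<\cdots<t_{K}\}$. Substituting back $t_{i}=\pi r_{i}^{2}$ with Jacobian $\prod_{i}2\pi r_{i}$ recovers the same density. The only point demanding care, and the single place where a slip is easy, is the bookkeeping of the ordering constraint together with the recognition that the surviving exponential factor depends solely on $r_{K,s}$ (the outermost of the $K$ distances) rather than on each $r_{i,s}$; everything else is the routine Poisson void-and-annulus computation.
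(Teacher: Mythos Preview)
Your derivation is correct and is the standard argument for the joint density of the $K$ nearest-neighbour distances in a planar homogeneous PPP; both the infinitesimal annulus/void decomposition and the mapping $t=\pi r^{2}$ to a rate-$\lambda_{s}$ Poisson process on the half-line are clean ways to obtain the result, and you correctly identify that the exponential factor involves only the outermost radius $r_{K,s}$. The paper itself gives no proof of this lemma at all---it is simply imported from \cite[Lemma~2]{Han2016}---so there is nothing to compare against beyond noting that your reconstruction matches the well-known derivation one would find in that reference or in standard stochastic-geometry texts.
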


In the special case, where the typical user associates with just a single SBS (noncooperative scenario), the corresponding probability is obtained in closed form.
\begin{corollary}\label{corollary1}
When the typical user associates with a single SBS, the probability is given by
\begin{align}
\mathcal{A}_{\mathrm{SBS}}=\frac{1}{1+\frac{1}{\hat{\lambda}_{s}}\frac{1}{\left(\hat{p}_{s} \hat{\Delta}_{s}\hat{B}_{s} \right)\frac{2}{\al}}}\label{corollary2} ,
\end{align}
where $\hat{\lambda}_{s}={\lambda}_{s}/\lambda_{m}$
\end{corollary}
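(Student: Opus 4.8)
The plan is to derive Corollary~\ref{corollary1} as the single-SBS specialization ($K=1$) of Proposition~\ref{PropAssoc}. When $K=1$ the association probability integral collapses to a one-dimensional integral over the distance $r_{1,s}$ to the single nearest SBS, and I would simply substitute the $K=1$ forms of $\eta$ and $f_{\Lambda}$ and carry out the resulting elementary integral. Since the integrand is a product of a Gaussian-type factor from the void probability and the nearest-neighbour PDF, the integral is a standard Gamma-type integral that evaluates in closed form.

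First I would set $K=1$ in Proposition~\ref{PropAssoc}, so $\eta = 1/\bigl(\hat{p}_{s}\hat{\Delta}_{s}\hat{B}_{s}\, r_{1,s}^{-\hat{\al}}\bigr) = r_{1,s}^{\hat{\al}}/(\hat{p}_{s}\hat{\Delta}_{s}\hat{B}_{s})$, and the joint PDF reduces to the familiar nearest-neighbour density $f_{\Lambda}(r_{1,s}) = 2\pi\lambda_{s}\, r_{1,s}\,\mathrm{e}^{-\lambda_{s}\pi r_{1,s}^{2}}$. The exponent in the void-probability factor becomes $-\lambda_{m}\pi\,\eta^{2/\al} = -\lambda_{m}\pi\, r_{1,s}^{2}/(\hat{p}_{s}\hat{\Delta}_{s}\hat{B}_{s})^{2/\al}$, invoking the assumption (made just before Proposition~\ref{PropAssoc}) that $\al_{m}=\al_{s}=\al$, so $\hat{\al}=1$. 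Then
\begin{align}
\mathcal{A}_{\mathrm{SBS}}
=\int_{0}^{\infty}\!\! 2\pi\lambda_{s}\,r\,
\exp\!\Bigl(-\pi r^{2}\bigl[\lambda_{s}+\lambda_{m}(\hat{p}_{s}\hat{\Delta}_{s}\hat{B}_{s})^{-2/\al}\bigr]\Bigr)\,\mathrm{d}r,\nonumber
\end{align}
where I abbreviate $r=r_{1,s}$.

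Next I would evaluate this with the elementary substitution $u=r^{2}$, giving $\int_{0}^{\infty} e^{-\pi c u}\,\mathrm{d}u = 1/(\pi c)$ with $c=\lambda_{s}+\lambda_{m}(\hat{p}_{s}\hat{\Delta}_{s}\hat{B}_{s})^{-2/\al}$; hence $\mathcal{A}_{\mathrm{SBS}} = \lambda_{s}/c$. Dividing numerator and denominator by $\lambda_{m}$ and writing $\hat{\lambda}_{s}=\lambda_{s}/\lambda_{m}$ converts this into the advertised form $1/\bigl(1+\hat{\lambda}_{s}^{-1}(\hat{p}_{s}\hat{\Delta}_{s}\hat{B}_{s})^{-2/\al}\bigr)$, matching~\eqref{corollary2}. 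I do not anticipate a genuine obstacle here, since the whole argument is a substitution plus one Gaussian-moment integral; the only point requiring care is the bookkeeping of the exponent $2/\al$ and confirming that the single-SBS case really forces $\hat{\al}=1$ so that $\eta$ takes the clean power-law form above. The main conceptual content is already carried by Proposition~\ref{PropAssoc}, so this corollary is essentially a sanity check that the general multi-cell formula degenerates to the known single-tier association probability of~\cite{Jo2012,Gupta2014}.
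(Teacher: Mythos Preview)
Your argument is correct. You obtain the corollary by specializing Proposition~\ref{PropAssoc} to $K=1$, which means you are effectively conditioning on the distance $r_{1,s}$ to the nearest SBS and integrating against the nearest-neighbour density $2\pi\lambda_s r\,\mathrm{e}^{-\pi\lambda_s r^2}$. The paper, by contrast, does not specialize the cluster formula: it restarts from the definition $\mathcal{A}_{\mathrm{SBS}}=\mathbb{P}\bigl(\bar p_{r_s}B_s>\bar p_{r_m}B_m\bigr)$, conditions instead on the distance $r_m$ to the nearest MBS, writes $\mathbb{P}\bigl(r_s<(\hat p_s\hat\Delta_s\hat B_s)^{1/\al}r\bigr)=1-\mathrm{e}^{-\lambda_s\pi r^2(\hat p_s\hat\Delta_s\hat B_s)^{2/\al}}$ via the void probability, and integrates against $f_{r_m}(r)=2\pi\lambda_m r\,\mathrm{e}^{-\pi\lambda_m r^2}$ using~\cite[Eq.~(3.321.4)]{Gradshteyn2007}. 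The two routes are dual---one averages over the SBS distance, the other over the MBS distance---and both reduce to the same Gaussian integral $\int_0^\infty 2\pi\lambda r\,\mathrm{e}^{-\pi c r^2}\,\mathrm{d}r=\lambda/c$. Your route is arguably the more natural one for a corollary, since it literally inherits the result from the preceding proposition rather than reproving it; the paper's route has the minor advantage of not depending on the $\hat\al$ bookkeeping in Proposition~\ref{PropAssoc}.
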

\proof See Appendix~\ref{AlphaAssocCorollary1}.\endproof

\subsection{Coverage Probability}\label{coverage} 
The focal point is the presentation of the downlink coverage probability of the typical user in a MU-MIMO HetNet with the technical derivation given in Appendix~\ref{theoremCoverageProbabilityProof}. However, firstly, we have to provide the PDFs of the distance between a typical user and its serving BS/BSs in both noncooperative and cooperative scenarios. For convenience, in the noncooperative setting, we define the event that the user connects to an MBS or SBS as $A_{m}$ or $A_{s}$, respectively. Similarly, in the case of cooperative architecture, $D_{m}$ or $D_{s}$ corresponds to the event where the user connects to an MBS or the SBS cluster, respectively. 
\begin{proposition}\label{PDFdistance} 
 In the case that the user connects with the MBS, the PDF of the distance becomes
\begin{align}
 f_{R_{s}}\left( r \right)= \frac{2\pi \lambda_{m}r \mathrm{e}^{-\pi r^{2}\left( \lambda_{m}+{\lambda_{s}\beta^{\frac{2}{\al}{}}} \right)}}{1-\mathcal{A}_{\mathrm{SBS}}}.\label{PDFdistanceMBS1} 
\end{align}
The PDF of the distance between the typical user and its associated SBS (noncooperative scenario) is given by
\begin{align}
f_{R_{m}}\left( r \right)= \frac{2\pi \lambda_{s}r \mathrm{e}^{-\pi r^{2}\left( \lambda_{s}+{\lambda_{m}\beta^{-\frac{2}{\al}}}{} \right)} }{\mathcal{A}_{\mathrm{SBS}}},\label{PDFdistanceSBS} 
\end{align}
where $\beta=\left( \hat{p}_{s}\hat{\Delta}_{s}\hat{B}_{s}  \right)^{-1}$ with $\mathcal{A}_{\mathrm{SBS}}$ given by~\eqref{corollary2}. On the other hand (cooperative scenario), the PDF of the distance between the typical user and the associated SBS cluster is written as
\begin{align}
 f_{R_{cs}}\left( \br \right)= \frac{\mathrm{e}^{-\pi \lambda_{m} \eta^{\frac{2}{\al}} }}{\mathcal{A}_{\mathrm{SBS_{cl}}}}f_{\Lambda}\left( \br \right), \label{PDFdistanceSBSCluster} 
\end{align}
where $\mathcal{A}_{\mathrm{SBS_{cl}}}$ is given by~Proposition~\ref{PropAssoc}. In the case of assosiation between the typical user and the MBS, the PDF of the distance is given by
\begin{align}
 f_{R_{cm}}\left( r \right)= \frac{\mathrm{e}^{-\pi \lambda_{m} \eta^{-\frac{2}{\al}} }}{1-\mathcal{A}_{\mathrm{SBS_{cl}}}}f_{r_{m}}\left( r \right),\label{PDFdistanceMBS} 
\end{align}
where $f_{r_{m}}\left( r \right)$ is provided by~\eqref{AlphaAssocCor3}.
\end{proposition}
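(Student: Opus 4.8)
The plan is to obtain each of the four densities as a \emph{conditional} nearest-BS distance distribution, following the standard stochastic-geometry recipe for biased cell association~\cite{Jo2012,Gupta2014}: for a given candidate serving tier, (i) write down the unconditional density of the relevant distance (the nearest point of a PPP, or the $K$ nearest points for a cluster), (ii) multiply by the void probability that the competing tier supplies no stronger biased power, which the selection rule converts into a guard-region constraint, and (iii) normalize by the corresponding association probability so the result integrates to one. In every case the guard-region radius is read directly off the selection inequality, and the void factor is the usual $\mathrm{e}^{-\pi\lambda\,(\text{radius})^{2}}$ of a homogeneous PPP.

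For the noncooperative MBS case I would start from the nearest-MBS density $2\pi\lambda_{m}r\,\mathrm{e}^{-\pi\lambda_{m}r^{2}}$. Placing the serving MBS at distance $r$, the rule~\eqref{Selection1} with a common path-loss exponent $\al$ requires that no SBS deliver higher biased power, i.e.\ that the nearest SBS lie outside a radius proportional to $r$ with scaling fixed by $\beta=\left(\hat{p}_{s}\hat{\Delta}_{s}\hat{B}_{s}\right)^{-1}$; the PPP void probability then contributes the factor $\mathrm{e}^{-\pi\lambda_{s}\beta^{\pm2/\al}r^{2}}$. Dividing the resulting joint density by the MBS-association probability $1-\mathcal{A}_{\mathrm{SBS}}$ yields~\eqref{PDFdistanceMBS1}. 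The SBS case is identical after interchanging $m\leftrightarrow s$ and inverting the guard-region scaling, with normalization by $\mathcal{A}_{\mathrm{SBS}}$ from Corollary~\ref{corollary1}, giving~\eqref{PDFdistanceSBS}.

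The cooperative densities follow the same template but with the cluster geometry. For the SBS cluster I would take the joint density $f_{\Lambda}(\br)$ of the $K$ nearest SBS distances supplied by the preceding Lemma, and then impose cluster association: the rule~\eqref{Selection} is equivalent to the nearest MBS lying beyond the radius $\eta^{1/\al}$, where $\eta=1/\left(\hat{p}_{s}\hat{\Delta}_{s}\hat{B}_{s}\sum_{x_{i,s}\in\mathcal{C}}^{x_{K,s}}r_{i,s}^{-\hat{\al}}\right)$ as in Proposition~\ref{PropAssoc}. The MBS void probability contributes $\mathrm{e}^{-\pi\lambda_{m}\eta^{2/\al}}$, so the product $\mathrm{e}^{-\pi\lambda_{m}\eta^{2/\al}}f_{\Lambda}(\br)$ is precisely the integrand defining $\mathcal{A}_{\mathrm{SBS_{cl}}}$ in Proposition~\ref{PropAssoc}; normalizing by $\mathcal{A}_{\mathrm{SBS_{cl}}}$ gives~\eqref{PDFdistanceSBSCluster} at once. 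The cooperative MBS case is the complementary event, so the same computation with the MBS now \emph{inside} the guard region and the cluster configuration marginalized, normalized by $1-\mathcal{A}_{\mathrm{SBS_{cl}}}$ and expressed through $f_{r_{m}}(r)$ of~\eqref{AlphaAssocCor3}, produces~\eqref{PDFdistanceMBS}.

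I expect the main obstacle to be the two cooperative cases, where the exclusion radius is not a fixed multiple of one distance but a functional $\eta^{1/\al}(\br)$ of the entire $K$-tuple: the void-probability factor must be applied conditionally on $\br$ before the cluster distances are integrated out, and one must check that the joint density so obtained coincides with the integrand already appearing in Proposition~\ref{PropAssoc}, so that normalization by the association probability is internally consistent. By contrast, the two noncooperative densities are routine once the guard-region radius and its $\beta^{\pm2/\al}$ scaling are correctly extracted from the selection inequality.
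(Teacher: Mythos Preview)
Your proposal is correct and follows essentially the same route as the paper: the paper also writes the conditional distance law as the joint event (nearest-BS density times the PPP void probability of the competing tier over the guard region dictated by the selection rule) divided by the association probability, only it passes through the CCDF $\mathbb{P}[R>r\mid\cdot]$ and then differentiates, whereas you write the conditional PDF directly. For the cooperative SBS cluster the paper likewise observes that the conditional density is exactly the integrand of Proposition~\ref{PropAssoc} normalized by $\mathcal{A}_{\mathrm{SBS_{cl}}}$, matching your argument verbatim.
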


\proof See Appendix~\ref{PDFdistanceProof}.\endproof

Regarding its definition, the coverage probability for a user associated with a BS in the $j$th tier expresses the probability that the received SINR is greater than a threshold $\mathcal{T}$~ {\cite[Eq.~1]{Andrews2011}}. In mathematical terms, we have
\begin{align}
 P_{\mathrm{c},j}=\mathbb{P}\left( \gamma_{j} >\mathcal{T} \right),\label{defin}
\end{align}
where $\mathbb{P}\left( \cdot \right)$ defines probability.
For the sake of exposition, we define the set of parameters of both tiers as $q=\{\lambda_{j},B_{j},M_{j},\Psi_{j},p_{j}\}$ for $j=\{m,s\}$, and we focus on each scenario separately.
\subsubsection{Noncooperative scenario}
\begin{theorem}\label{theoremCoverageProbability} 
In the noncooperative model, the downlink probability of coverage for a typical user, associated with the $i$th BS in the MBS tier located at a distance $r_{i,m}$ far from the typical user, is given by
\begin{align}
 P_{c,A_{m}}\left( \mathcal{T},q \right)=&\int_{r>0}\sum^{\Delta_{s}}_{k=0}\frac{1}{k!}\left( -s \right)^{k}\frac{\mathrm{d}^{k}}{\mathrm{ds}^{k}}\left( \mathrm{e}^{-sN}\mathcal{L}_{\mathcal{I}}\{s \} \right)\nn\\
 &\times f_{R_{m}}\left( r \right)\mathrm{d}r, \label{theorem1_1}
\end{align}
where $s=p_{m}^{-1}\mathcal{T}r^{\al}$,
while the downlink probability of coverage for a typical user associated with a SBS is provided by
\begin{align}
 P_{c,A_{s}}\left( \mathcal{T},q \right)= &\int_{r>0}\sum^{\Delta_{m}}_{k=0}\frac{1}{k!}\left( -s \right)^{k}\frac{\mathrm{d}^{k}}{\mathrm{ds}^{k}}\left( \mathrm{e}^{-sN}\mathcal{L}_{\mathcal{I}}\{s \} \right)\nn\\
 &\times f_{R_{s}}\left( r \right)\mathrm{d}r,\label{theorem1_2}
\end{align}
where $s=p_{s}^{-1}\mathcal{T}r^{\al}$. The distance distributions are given in Proposition~\ref{PDFdistance}, and the Laplace transform of the interference is obtained by~Proposition~\ref{LaplaceTransform} with its derivative given by~Lemma~\ref{DerivativeOfLaplace}.
\proof See Appendix~\ref{theoremCoverageProbabilityProof}.\endproof

The overall coverage probability of the noncooperative scenario is given by
\begin{align}
P_{c,nc}\!\left( \mathcal{T},q \right) =\left( 1-\mathcal{A}_{\mathrm{SBS}} \right)P_{c,A_{m}}\!\left( \mathcal{T},q \right)+\mathcal{A}_{\mathrm{SBS}} P_{c,A_{s}}\!\left( \mathcal{T},q \right)\!,\nn
\end{align}
where $\mathcal{A}_{\mathrm{SBS}}$ is given by~\eqref{corollary2}.
\end{theorem}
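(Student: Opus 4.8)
The plan is to evaluate $P_{c,j}=\mathbb{P}(\gamma_j>\mathcal{T})$ directly from the SINR in~\eqref{eq:SINR}, specialised to the noncooperative setting in which the serving set $\mathcal{B}$ reduces to a single tagged BS. First I would condition on the distance $r$ to that BS, so that the desired power is $p_j h r^{-\alpha}$ with $h\sim\Gamma(\Delta_j,1)$ and integer shape $\Delta_j$ determined by the serving tier. Rearranging the coverage event $\gamma_j>\mathcal{T}$ then isolates $h$ on one side and yields the equivalent event $h>s(\mathcal{I}+\sigma_j^2)$, where $s=p_j^{-1}\mathcal{T}r^{\alpha}$ and $\mathcal{I}$ is the aggregate inter-tier interference in the denominator of~\eqref{eq:SINR}.

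Since $\Delta_j$ is a positive integer, I would next invoke the closed-form complementary CDF of a Gamma variable, $\mathbb{P}(h>x)=\sum_{k=0}^{\Delta_j-1}\tfrac{x^{k}}{k!}e^{-x}$, evaluated at $x=s(\mathcal{I}+\sigma_j^2)$. Taking the expectation over the interference and exploiting the identity $(\mathcal{I}+\sigma_j^2)^{k}e^{-s(\mathcal{I}+\sigma_j^2)}=(-1)^{k}\tfrac{d^{k}}{ds^{k}}e^{-s(\mathcal{I}+\sigma_j^2)}$, I would interchange expectation and the $k$-fold differentiation to arrive at
\begin{align}
\mathbb{E}_{\mathcal{I}}\!\left[\mathbb{P}\!\left(h>s(\mathcal{I}+\sigma_j^2)\right)\right]=\sum_{k=0}^{\Delta_j-1}\frac{(-s)^{k}}{k!}\frac{d^{k}}{ds^{k}}\!\left(e^{-sN}\mathcal{L}_{\mathcal{I}}(s)\right),\nonumber
\end{align}
where $\mathcal{L}_{\mathcal{I}}(s)=\mathbb{E}_{\mathcal{I}}[e^{-s\mathcal{I}}]$ and the factor $e^{-sN}$ absorbs the noise contribution $e^{-s\sigma_j^{2}}$. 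This is exactly the conditional coverage expression appearing inside the integrals of~\eqref{theorem1_1} and~\eqref{theorem1_2}.

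Finally, I would de-condition by averaging over the serving-distance law supplied by Proposition~\ref{PDFdistance}, using the density $f_{R_m}(r)$ for MBS association and $f_{R_s}(r)$ for SBS association, and inserting the transmit power of the serving tier into $s$; the transform $\mathcal{L}_{\mathcal{I}}(s)$ and its derivatives are furnished by Proposition~\ref{LaplaceTransform} and Lemma~\ref{DerivativeOfLaplace}. Running the identical argument once with the MBS parameters and once with the SBS parameters produces~\eqref{theorem1_1} and~\eqref{theorem1_2} respectively, and the total $P_{c,nc}$ follows by weighting the two conditional results with the association probability $\mathcal{A}_{\mathrm{SBS}}$ of Corollary~\ref{corollary1} and its complement.

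The main obstacle is not the conditioning step but the interference characterisation that feeds it. Rigorously, one must justify swapping $\mathbb{E}_{\mathcal{I}}$ with the $k$-fold derivative (via analyticity of the Laplace transform on $s>0$ together with dominated convergence), but the genuinely heavy part is obtaining $\mathcal{L}_{\mathcal{I}}(s)$ in closed form: $\mathcal{I}$ collects independent contributions from both PPPs $\Phi_m$ and $\Phi_s$ whose supports are constrained by the cell-association rule~\eqref{Selection1}, which carves an interference-exclusion region around the tagged BS. Evaluating the probability generating functionals over these constrained regions, and then differentiating the resulting expression $k$ times into a tractable form, is where the real algebraic effort lies---precisely why those computations are deferred to Proposition~\ref{LaplaceTransform} and Lemma~\ref{DerivativeOfLaplace} and only \emph{invoked} here.
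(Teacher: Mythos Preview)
Your proposal is correct and follows essentially the same route as the paper's proof in Appendix~\ref{theoremCoverageProbabilityProof}: condition on the serving distance, exploit the integer-shape Gamma CCDF of the desired link gain, convert the resulting moments of $e^{-s(\mathcal{I}+N)}$ into derivatives of the Laplace transform, and then average against the distance density from Proposition~\ref{PDFdistance}. Your derivative identity $(\mathcal{I}+N)^{k}e^{-s(\mathcal{I}+N)}=(-1)^{k}\tfrac{d^{k}}{ds^{k}}e^{-s(\mathcal{I}+N)}$ is exactly the Laplace-transform identity the paper invokes at~\eqref{coverage_definition3}, and your summation limit $\Delta_j-1$ in fact matches the paper's own derivation in~\eqref{coverage_definition2}--\eqref{coverage_definition4}.
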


Clearly, the Laplace transforms in Theorem~\ref{theoremCoverageProbability} follow the same expression, however, the argument changes depending on the tier including the associated BS. The presentation of the Laplace transform of the interference $\mathcal{I}$ follows.

\begin{proposition}\label{LaplaceTransform} 
The Laplace transform of the interference power from the $i$th BS of the MBS tier when the BSs have multiple antennas and serve multiple users, is given by
\begin{align}
&\mathcal{L}_{I}\!\left({s} \right)=\exp\bigg\{\!\!-s^{\frac{2}{\al}}\!\!\sum_{j= \{s,m\}}\!\lambda_{j} {p_{j}}^{\frac{2}{a}}\mathcal{C}\left( \al, \Psi_{j},w_{j}\right)\!\!\bigg\},
\end{align}
where $\mathcal{C}\left( \al, \Psi_{j},w_{j}\right)=\displaystyle\frac{2 \pi }{\al}\sum_{i=1}^{\Psi_{j}}\!\!\binom{\Psi_{j}}{i}\mathrm{B}^{'}\!\!\left(\! \Psi_{j}\!+\!i\!-\!\frac{2}{a},i\!+\!\frac{2}{a}, w_{j}\right)$ with $B_{x}^{'}\left( p,q \right)$ being the complimentary incomplete Beta function defined in Appendix~\ref{LaplaceTransformproof}. Note that $w_{j}$ represents the minimum distance between the typical user and its nearest interfering BS in tier $j$. In the case that the user is associated with the MBS, $w_{m}=\frac{1}{1+\left( s p_{m} \right)^{-\al}r}$ and $w_{s}=\frac{1}{1+\left( s p_{s} \right)^{-\al}\eta^{-1}}$ with $\eta=1/\left( \hat{p}_{s}\hat{\Delta}_{s} \hat{B}_{s}r^{{\al}} \right)$. Otherwise, if the user is associated with the SBS, $w_{m}=\frac{1}{1+\left( s p_{m} \right)^{-\al}\eta}$ with $\eta=1/\left( \hat{p}_{s}\hat{\Delta}_{s} \hat{B}_{s}r^{{\al}} \right)$ and $w_{s}=\frac{1}{1+\left( s p_{s} \right)^{-\al}r}$.
\end{proposition}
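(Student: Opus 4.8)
The plan is to compute $\mathcal{L}_{I}(s)=\EE[\mathrm{e}^{-sI}]$ directly from the aggregate interference $I=\sum_{x_{i,j}\notin\mathcal{B}}p_{j}g_{i,j}r_{i,j}^{-\al}$ by combining the independence of the two tiers with the probability generating functional (PGFL) of a PPP. First I would observe that, since $\Phi_{m}$ and $\Phi_{s}$ are \emph{independent} homogeneous PPPs and the fading variables are independent across BSs, the expectation factorizes as $\mathcal{L}_{I}(s)=\prod_{j\in\{m,s\}}\mathcal{L}_{I_{j}}(s)$. It therefore suffices to treat a single tier and multiply at the end; this product is precisely what yields the sum over $j\in\{s,m\}$ inside the exponential.

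For a fixed tier $j$ I would first average over the fading, conditioned on the point pattern. Because the interfering gains obey $g_{i,j}\sim\Gamma(\Psi_{j},1)$, the per-point fading average is the Gamma moment generating function, $\EE_{g}[\mathrm{e}^{-s p_{j} g\,r^{-\al}}]=(1+s p_{j} r^{-\al})^{-\Psi_{j}}$. Applying the PGFL of a PPP of intensity $\lambda_{j}$ then converts the product over interferers into
\begin{align}
\mathcal{L}_{I_{j}}(s)=\exp\!\left\{-2\pi\lambda_{j}\!\int_{d_{j}}^{\infty}\!\Big(1-(1+s p_{j} r^{-\al})^{-\Psi_{j}}\Big)r\,\mathrm{d}r\right\}\!,\nn
\end{align}
where the lower limit $d_{j}$ is the radius of the interference-free (riddance) region in tier $j$. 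This $d_{j}$ is exactly the minimum interferer distance fixed by the cell-selection rule: when the typical user is served by the MBS, no SBS may offer larger biased power, which by Proposition~\ref{PropRegion} pushes the nearest SBS beyond a distance determined by $\eta$, while the nearest MBS interferer lies beyond the serving distance $r$; the two roles are interchanged when the user is served by the SBS. This is the source of the case split in the definitions of $w_{m}$ and $w_{s}$.

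The core analytic step is to reduce the radial integral to the complementary incomplete Beta function. I would apply a substitution of the form $t=(1+s p_{j} r^{-\al})^{-1}$ (equivalently its complement), which maps $r=d_{j}$ to the lower limit $w_{j}=(1+s p_{j} d_{j}^{-\al})^{-1}$ and $r\to\infty$ to $t=1$, and which produces $r\,\mathrm{d}r=\tfrac{(s p_{j})^{2/\al}}{\al}\,t^{2/\al-1}(1-t)^{-2/\al-1}\,\mathrm{d}t$. This cleanly extracts the prefactor $s^{2/\al}p_{j}^{2/\al}$ and the $2\pi/\al$ factor appearing in $\mathcal{C}(\al,\Psi_{j},w_{j})$, and identifies $w_{j}$ with exactly the quantities listed in the statement. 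Expanding the resulting integrand through the binomial theorem yields the finite sum $\sum_{i=1}^{\Psi_{j}}\binom{\Psi_{j}}{i}$ of integrals $\int_{w_{j}}^{1}t^{\,\cdots}(1-t)^{\,\cdots}\,\mathrm{d}t$, each of which is a complementary incomplete Beta function $\mathrm{B}'(\cdot,\cdot,w_{j})$ in the convention fixed in Appendix~\ref{LaplaceTransformproof}; collecting the terms gives $\mathcal{C}(\al,\Psi_{j},w_{j})$, and taking the product over the two tiers completes the proof.

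The main obstacle I anticipate is not the PGFL machinery but the bookkeeping in the last step: getting $d_{j}$ correct in each association scenario so that $w_{j}$ matches the statement, carrying out the binomial expansion so that the exponents of $t$ and $1-t$ land on precisely the Beta arguments $\Psi_{j}+i-2/\al$ and $i+2/\al$, and verifying that every resulting Beta integral converges. The convergence hinges on the $(1-t)^{-2/\al-1}$ singularity at $t=1$ being tamed by the vanishing of $1-(1+s p_{j}r^{-\al})^{-\Psi_{j}}$, which is guaranteed exactly by the standing assumption $\al>2$.
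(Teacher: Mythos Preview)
Your proposal is correct and follows essentially the same route as the paper: factorize over the two independent tiers, average the Gamma fading to obtain $(1+sp_{j}r^{-\al})^{-\Psi_{j}}$, apply the PGFL of the PPP over the exclusion ball, and then reduce the radial integral to complementary incomplete Beta functions via the substitution $t=(1+sp_{j}r^{-\al})^{-1}$ together with a binomial expansion. The only cosmetic difference is ordering---the paper expands with the binomial theorem first and then performs the substitution (in two steps, $u=(sp_{j})^{-1/\al}r$ followed by $t=(1+u^{-\al})^{-1}$), whereas you substitute first and expand afterwards---but the computation and the identification of $w_{j}$ with the transformed lower limit are the same.
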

\begin{proof}
See Appendix~\ref{LaplaceTransformproof}.
\end{proof}

The $k$th derivative of the Laplace transform is provided by means of the following lemma.

\begin{lemma}\label{DerivativeOfLaplace} 
The $k$th derivative of the Laplace transform of the interference plus noise power is given by 
\begin{align}
\frac{\mathrm{d}^{k}\!\mathcal{L}_{\mathcal{I}N}\{s\}}{\mathrm{d}s^{k}}\!=\!\mathcal{L}_{\mathcal{I}N}\{s\}\!\!\sum_{m=1}^{k}\sum_{\substack{i_{1}+i_{2}+\ldots+i_{k}=m\\
i_{1}+2i_{2}+\ldots+ik_{i}=k}
}\!\!\!\!\!\!\!\!\mathcal{C}\left(i_{j} \right)\prod_{j=1}^{k}E\left( j \right)\
\end{align}
where 
\begin{align}
  &\mathcal{C}\left( i_{j} \right)=\frac{k!}{\prod_{m=1}^{k}m!^{i_{j}}m^{i}},\nn\\
 &D_{i}\left( j \right)=\frac{\lambda_i}{\al}\left( -1 \right)^{j}\frac{\left( \Psi_{i}+j-1 \right)~} {\left(\Psi_{i}-1 \right)!}\mathrm{B}^{'}\!\left(\! \Psi_{j}\!+\!\frac{2}{a},l\!-\!\frac{2}{a}, w_{j}\right),\nn\\
&E_{j}\left( j \right)=\left( -N\mathbbm{1}_{j=1}+2\pi\sum_{i=1}^{2}\left( -1 \right)^{j}D_{i}\left( j \right)p_{i}^{j}\left( s p_{i} \right)^{\frac{2}{\al}-j}\right)^{m_{j}}\nn.
\end{align}
\end{lemma}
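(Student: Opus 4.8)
The plan is to observe that the quantity being differentiated is the exponential of a single scalar function of $s$. Writing $\mathcal{L}_{\mathcal{I}N}\{s\}=e^{-sN}\mathcal{L}_{I}(s)=\exp\{g(s)\}$ with
\[
g(s)=-sN-s^{\frac{2}{\alpha}}\sum_{j=\{s,m\}}\lambda_{j}\,p_{j}^{\frac{2}{\alpha}}\,\mathcal{C}\!\left(\alpha,\Psi_{j},w_{j}\right),
\]
the $k$th derivative is governed entirely by Fa\`a di Bruno's formula for the composition $\exp\circ g$. First I would invoke its Bell-polynomial form,
\[
\frac{\mathrm{d}^{k}}{\mathrm{d}s^{k}}e^{g(s)}=e^{g(s)}\sum_{m=1}^{k}\sum_{\substack{i_{1}+\cdots+i_{k}=m\\ i_{1}+2i_{2}+\cdots+ki_{k}=k}}\frac{k!}{\prod_{j=1}^{k}i_{j}!\,(j!)^{i_{j}}}\prod_{j=1}^{k}\left(g^{(j)}(s)\right)^{i_{j}},
\]
which reproduces exactly the outer double sum, the combinatorial coefficient $\mathcal{C}(i_{j})$, and the product $\prod_{j}E(j)$ of the claim once $E(j)$ is identified with $(g^{(j)}(s))^{i_{j}}$ (i.e. $m_{j}=i_{j}$) and $e^{g(s)}=\mathcal{L}_{\mathcal{I}N}\{s\}$ is pulled out front.

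The remaining work is to evaluate the derivatives $g^{(j)}(s)$ term by term. The noise part $-sN$ is linear, so $\frac{\mathrm{d}^{j}}{\mathrm{d}s^{j}}(-sN)=-N$ at $j=1$ and vanishes for $j\ge 2$; this produces the indicator $-N\mathbbm{1}_{j=1}$ inside $E(j)$. For the interference part I would differentiate $s^{2/\alpha}\mathcal{C}(\alpha,\Psi_{j},w_{j})$ with respect to $s$. Since the prefactor is a pure power $s^{2/\alpha}$, the $j$-fold derivative yields the shifted exponent $s^{2/\alpha-j}$ and the accompanying falling-factorial factor, giving the $(sp_{i})^{2/\alpha-j}$ and $p_{i}^{j}$ dependence visible in $E(j)$, together with the tier sum $\sum_{i=1}^{2}$ over $\{s,m\}$ and the density $\lambda_{i}$. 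The incomplete-Beta content of $\mathcal{C}$ is packaged into $D_{i}(j)$, whose explicit form with $\mathrm{B}^{'}$, the alternating sign $(-1)^{j}$, and the Gamma-ratio coefficient $\frac{(\Psi_{i}+j-1)!}{(\Psi_{i}-1)!}$ I would obtain by carrying out the $j$-fold differentiation of the Beta-integral representation of $\mathcal{C}$ supplied in Appendix~\ref{LaplaceTransformproof}. Collecting both contributions gives $g^{(j)}(s)=-N\mathbbm{1}_{j=1}+2\pi\sum_{i=1}^{2}(-1)^{j}D_{i}(j)\,p_{i}^{j}(sp_{i})^{2/\alpha-j}$, and raising to the power $i_{j}$ recovers $E(j)$.

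The main obstacle I anticipate lies in this second step: differentiating the complementary incomplete Beta function $\mathrm{B}^{'}(\cdot,\cdot,w_{j})$ repeatedly and tracking how the $s$-dependence enters both through the explicit prefactor $s^{2/\alpha}$ and, should $w_{j}$ be regarded as $s$-dependent through the expressions in Proposition~\ref{LaplaceTransform}, through the Beta argument itself. Keeping the falling factorials, the signs $(-1)^{j}$, and the Gamma-ratio coefficients mutually consistent across all orders $j$ is where index and sign errors are most likely to arise; by contrast, the Fa\`a di Bruno skeleton and the identification of coefficients are routine once $g(s)$ is written out explicitly as above.
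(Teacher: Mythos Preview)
Your Fa\`a di Bruno skeleton is exactly what the paper does: write $\mathcal{L}_{\mathcal{I}N}\{s\}=\exp\{g(s)\}$, apply the Bell-polynomial form, and identify $E(j)=(g^{(j)}(s))^{i_j}$. That part matches the paper verbatim.

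The gap is in your plan for computing $g^{(j)}(s)$. You propose to differentiate the \emph{closed form} $-s^{2/\alpha}\sum_j\lambda_j p_j^{2/\alpha}\mathcal{C}(\alpha,\Psi_j,w_j)$ coming out of Proposition~\ref{LaplaceTransform}, treating the $s$-dependence as sitting essentially in the prefactor $s^{2/\alpha}$. This does not produce the stated result: $j$-fold differentiation of $s^{2/\alpha}$ gives the falling factorial $\prod_{\ell=0}^{j-1}(2/\alpha-\ell)$, \emph{not} the ratio $(\Psi_i+j-1)!/(\Psi_i-1)!$ that appears in $D_i(j)$. Moreover, as you yourself flag, $w_j$ in the Beta argument depends on $s$ (see the explicit formulas in Proposition~\ref{LaplaceTransform}), so the closed form cannot be differentiated as a pure power times a constant; tracking the $w_j(s)$ contribution through repeated differentiation of $\mathrm{B}'$ is precisely the bookkeeping you would need, and it does not collapse naturally to the displayed $D_i(j)$.

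The paper avoids this entirely by stepping back \emph{before} the Beta substitution. It takes $g$ in its integral form,
\[
g(s)=-sN+2\pi\sum_{i}\lambda_i\int_{r_i}^{\infty}\Bigl(-1+\frac{1}{(1+sp_i r^{-\alpha})^{\Psi_i}}\Bigr)r\,\mathrm{d}r,
\]
where the lower limit is the ($s$-independent) exclusion radius. Differentiating $j$ times under the integral sign hits only $(1+sp_i r^{-\alpha})^{-\Psi_i}$ and yields directly
\[
(-1)^{j}\,\frac{(\Psi_i+j-1)!}{(\Psi_i-1)!}\,(p_i r^{-\alpha})^{j}\,(1+sp_i r^{-\alpha})^{-(\Psi_i+j)},
\]
which is where the Gamma-ratio and the sign $(-1)^j$ actually come from. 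Only \emph{after} this differentiation does the paper perform the same substitution $u=(sp_i)^{-1/\alpha}r$, $t=(1+u^{-\alpha})^{-1}$ used in Appendix~\ref{LaplaceTransformproof}, producing the factor $(sp_i)^{2/\alpha-j}$ and the complementary incomplete Beta function packaged as $D_i(j)$. So the route is ``differentiate the integral, then Beta-substitute,'' not ``Beta-substitute, then differentiate''; reversing the order as you propose is what creates the obstacle you anticipated.
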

\begin{proof}
See Appendix~\ref{DerivativeOfLaplaceproof}.
\end{proof}

\subsubsection{Cooperative scenario}
\begin{theorem}\label{theoremCoverageProbabilityCooperative} 
In the cooperative model, the downlink probability of coverage $P_{c,D_{m}}\left( \mathcal{T},q \right)$ of the typical user associated with the $i$th BS in the MBS tier, is obtained by substituting the PDF of the distance $f_{R_{cm}}\left( r \right)$ instead of $f_{R_{m}}\left( r \right)$. In the case that the typical user is associated with the SBS cluster, its downlink probability of coverage $P_{c,D_{s}}\left( \mathcal{T},q \right)$ is given by 
\begin{align}
 P_{c,D_{s}}\left( \mathcal{T},q \right)= &\int \displaylimits_{\substack{0<r_{1,s}<r_{2,s}\ldots\\r_{K,s}<r_{ \infty,s}}}\sum^{\Delta_{m}-1}_{k=0}\frac{1}{k!}\left( -s \right)^{k}\nn\\
 &\times \frac{\mathrm{d}^{k}}{\mathrm{ds}^{k}}\left( \mathrm{e}^{-sN}\mathcal{L}_{\mathcal{I}}\{s \} \right) f_{R_{cs}}\left( \br \right)\mathrm{d}\br,
\end{align}
where $s=\frac{\mathcal{T}}{\sum_{x_{i,s}\in\mathcal{C}}^{x_{K,s}}\hat{p}_{s}\hat{\Delta}_{s}\hat{B}_{s}r_{i,s}^{-\al}}$ and the Laplace transform is obtained by Proposition~\ref{LaplaceTransform} with specific limits. Specifically, when the user is associated with the MBS, we have $w_{m}=\frac{1}{1+\left( s p_{m} \right)^{-\al}r}$ and $w_{s}\approx\frac{1}{1+\left( s p_{s} \right)^{-\al}\eta^{-1}}$ with $\eta=1/\left( \hat{p}_{s}\hat{\Delta}_{s} \hat{B}_{s}r^{{\al}} \right)$. In the case that the user is associated with the SBS cluster, we have $ \sum_{x_{i,s}\in\mathcal{C}}^{x_{K,s}}\hat{p}_{s}\hat{\Delta}_{s}\hat{B}_{s}r_{i,s}^{-\al} >r_{m}^{\al}$. The first interference is found at a distance $r_{1,s}^{-\al} >\beta r_{m}^{\al}$. Thus, we have $w_{m}=\frac{1}{1+\left( s p_{m} \right)^{-\al}\eta}$ with $\eta=1/\left( \hat{p}_{s}\hat{\Delta}_{s} \hat{B}_{s}r_{1,s}^{\al} \right)$ and $w_{s}=\frac{1}{1+\left( s p_{s} \right)^{-\al}r_{1,s}}$.

The overall coverage probability of the cooperative scenario is given by
\begin{align}
P_{c,cl}\!\left( \mathcal{T},q \right)\! =\!\left( 1\!-\!\mathcal{A}_{\mathrm{SBS_{cl}}}\right)P_{c,D_{m}}\!\left( \mathcal{T},q \right)\!+\!\mathcal{A}_{\mathrm{SBS_{cl}}} P_{c,D_{s}}\!\left( \mathcal{T},q \right)\!,\nn
\end{align}
where $\mathcal{A}_{\mathrm{SBS_{cl}}}$ is given by Proposition~\ref{PropAssoc}\footnote{Notably, if SDN and WNV are not introduced, the practical implementation of this complicated system is questionable.}.
\end{theorem}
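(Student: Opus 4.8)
The plan is to follow the template of the noncooperative Theorem~\ref{theoremCoverageProbability}, changing only the desired-signal statistics, the serving-distance PDFs, and the interference exclusion region to match the cooperative setting. Starting from the definition $P_{c,D_s}(\mathcal{T},q)=\mathbb{P}(\gamma_s>\mathcal{T})$ of~\eqref{defin}, I would condition on the point process, i.e., on the ordered cluster distances $\br=[r_{1,s},\ldots,r_{K,s}]$, so that the remaining randomness lives only in the channel gains $h_{i,s},g_{i,s}$ and in the interfering field. Since the noise is deterministic and independent of the interference $\mathcal{I}$, the Laplace transform of the aggregate interference-plus-noise factorizes as $e^{-sN}\mathcal{L}_{\mathcal{I}}\{s\}$, exactly the object appearing inside both coverage expressions.

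For the MBS branch $P_{c,D_m}$ the argument reuses that of $P_{c,A_m}$ in Theorem~\ref{theoremCoverageProbability}. The desired link carries a single $\Gamma(\Delta_m,1)$ gain, so the event $\{\gamma_m>\mathcal{T}\}$ collapses, through the integer-shape Gamma tail $\mathbb{P}(h>x)=e^{-x}\sum_{k=0}^{\Delta_m-1}x^{k}/k!$, into a finite sum in which each monomial $s^{k}(\mathcal{I}+N)^{k}$ is turned into a derivative of the Laplace transform by the identity $\mathbb{E}[X^{k}e^{-sX}]=(-1)^{k}\,\mathrm{d}^{k}\mathcal{L}_{X}(s)/\mathrm{d}s^{k}$. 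The only genuine changes relative to the noncooperative case are that the serving distance is drawn from $f_{R_{cm}}(r)$ of Proposition~\ref{PDFdistance} rather than $f_{R_m}(r)$, and that the lower limits $w_m,w_s$ of the interference Laplace transform in Proposition~\ref{LaplaceTransform} are reset so that the guard region encodes the cooperative association rule of Proposition~\ref{PropRegion}; this is precisely why $w_s$ can only be given approximately, the exact SBS exclusion region being governed by the full cluster inequality rather than by a single distance.

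The new content is the SBS-cluster branch $P_{c,D_s}$. Here the desired signal is the joint-transmission sum $\sum_{x_{i,s}\in\mathcal{C}}p_s h_{i,s} r_{i,s}^{-\al}$, and the decisive step is to normalize the inequality $\gamma_s>\mathcal{T}$ by the aggregate desired-power coefficient $\sum_{x_{i,s}\in\mathcal{C}}\hat{p}_s\hat{\Delta}_s\hat{B}_s r_{i,s}^{-\al}$, which yields the effective variable $s=\mathcal{T}/\big(\sum_{x_{i,s}\in\mathcal{C}}\hat{p}_s\hat{\Delta}_s\hat{B}_s r_{i,s}^{-\al}\big)$ of the statement. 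Treating the normalized aggregate gain as an integer-shape Gamma variable, the same Gamma-tail expansion and Laplace-derivative substitution produce the finite sum of the statement. Averaging over the cluster geometry with the joint serving-distance PDF $f_{R_{cs}}(\br)$ of Proposition~\ref{PDFdistance} over the ordered region $0<r_{1,s}<\cdots<r_{K,s}$ gives the stated integral, with the interference limits of Proposition~\ref{LaplaceTransform} evaluated under the nearest-interferer constraint $r_{1,s}^{-\al}>\beta r_m^{\al}$. Combining the two mutually exclusive events $D_m$ and $D_s$ by the law of total probability, weighted by $1-\mathcal{A}_{\mathrm{SBS_{cl}}}$ and $\mathcal{A}_{\mathrm{SBS_{cl}}}$ from Proposition~\ref{PropAssoc}, then delivers $P_{c,cl}(\mathcal{T},q)$.

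The hard part will be the desired-signal statistics in the cluster branch: $\sum_{x_{i,s}\in\mathcal{C}}p_s h_{i,s} r_{i,s}^{-\al}$ is a sum of independent but unequally scaled $\Gamma(\Delta_s,1)$ variables, which is not itself Gamma, so collapsing it into a single integer-shape Gamma tail, and hence into a clean finite sum tied to one shape parameter, requires either an approximation or a careful multinomial bookkeeping that I would need to justify. A secondary difficulty is fixing the interference exclusion region in the cooperative case, since the association inequality of Proposition~\ref{PropRegion} couples all $K$ cluster distances and thereby forces the single-distance approximations reflected in the stated $w_m$ and $w_s$.
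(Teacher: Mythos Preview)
Your proposal is correct and matches the paper's approach exactly: the paper's own proof is the single sentence ``the proof follows similar lines to Theorem~1, hence skipped,'' and you have faithfully unpacked that into conditioning on $\br$, the integer-shape Gamma tail, the Laplace-derivative identity, the substitution of $f_{R_{cm}}$ and $f_{R_{cs}}$ from Proposition~\ref{PDFdistance}, and the law of total probability. Your caveat about the cluster desired signal being a sum of unequally scaled $\Gamma(\Delta_s,1)$ variables rather than a single Gamma is a genuine subtlety that the paper leaves implicit under the ``similar lines'' clause.
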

\begin{proof}
The proof follows similar lines to Theorem~1, hence skipped.
\end{proof}

\subsection{Mean Achievable Rate}\label{AverageAchievableRate1}
Herein, having obtained the conditional coverage probability, we derive the mean achievable rate for both noncooperative and cooperative models.
\begin{theorem}\label{AverageAchievableRate1Theorem}
 The mean achievable rate in a multi-antenna HetNet capable of cell association and offloading, corresponding to the noncooperative scenario, is given by 
 \begin{align}
\!\!\!R_{nc}\!\left( \mathcal{T},q \right)\! =\!\left( 1\!-\!\mathcal{A}_{\mathrm{SBS}} \right)R_{A_{m}}\!\!\left( \mathcal{T},q \right)\!+\!\mathcal{A}_{\mathrm{SBS}} R_{A_{s}}\!\!\left( \mathcal{T},q \right)\!,
\end{align}
where $\mathcal{A}_{\mathrm{SBS}}$ is given by~\eqref{corollary2}. The mean achievable rate for the cooperative scenario is written as
\begin{align}
R_{c, cl}\left( \mathcal{T},q \right) =\left( 1-\mathcal{A}_{\mathrm{SBS_{cl}}}\right)R_{D_{m}}\left( \mathcal{T},q \right)+\mathcal{A}_{\mathrm{SBS_{cl}}} R_{D_{s}}\left( \mathcal{T},q \right),
\end{align}
where $\mathcal{A}_{\mathrm{SBS_{cl}}}$ is provided by Proposition~\ref{PropAssoc}, while $R_{i}$ for $i=A_{m},~A_{s},~D_{m},~D_{s}$ is obtained by substituting $P_{c,i}$ into 
\begin{align}
 R_{i}=\frac{1}{\ln 2}\int_{0}^\infty \!\!\!P_{c,i}\,\frac{1}{1+\theta}\mathrm{d}\theta,
\end{align}
where $P_{c,i}$ is obtained by means of Theorems~\ref{theoremCoverageProbability} and~\ref{theoremCoverageProbabilityCooperative}.
\end{theorem}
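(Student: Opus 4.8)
The plan is to establish the single-tier rate formula $R_i=\frac{1}{\ln 2}\int_0^\infty P_{c,i}\,\frac{1}{1+\theta}\,\mathrm{d}\theta$ first, and then assemble the two overall rates by conditioning on the association event. For the per-tier formula I would begin from the definition of the mean achievable rate as the ergodic spectral efficiency $R_i=\mathbb{E}\!\left[\log_2(1+\gamma_i)\right]$, where $\gamma_i$ is the downlink SINR of \eqref{eq:SINR} under the association event $i\in\{A_m,A_s,D_m,D_s\}$ and the expectation is taken over both the fading variables $h_{i,j},g_{i,j}$ and the point processes $\Phi_m,\Phi_s$. Since $\log_2(1+\gamma_i)$ is a nonnegative random variable, I would invoke the tail-integral identity $\mathbb{E}[Y]=\int_0^\infty \mathbb{P}(Y>t)\,\mathrm{d}t$, which gives $R_i=\int_0^\infty \mathbb{P}\!\left(\log_2(1+\gamma_i)>t\right)\mathrm{d}t=\int_0^\infty \mathbb{P}\!\left(\gamma_i>2^{t}-1\right)\mathrm{d}t$.

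The next step is the change of variables $\theta=2^{t}-1$, so that $t=\log_2(1+\theta)$ and $\mathrm{d}t=\frac{1}{(1+\theta)\ln 2}\,\mathrm{d}\theta$, with the limits mapping $t:0\to\infty$ onto $\theta:0\to\infty$. Substituting and recognizing the inner probability as the conditional coverage probability evaluated at threshold $\theta$, namely $\mathbb{P}(\gamma_i>\theta)=P_{c,i}(\theta,q)$ in the sense of \eqref{defin}, yields exactly $R_i=\frac{1}{\ln 2}\int_0^\infty P_{c,i}\,\frac{1}{1+\theta}\,\mathrm{d}\theta$. Here the factor $1/\ln 2$ emerges naturally as the Jacobian of the substitution, and the conditional coverage probabilities $P_{c,i}$ are precisely those furnished by Theorems~\ref{theoremCoverageProbability} and~\ref{theoremCoverageProbabilityCooperative}, so no further computation of $\gamma_i$ is required at this stage.

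Finally, to obtain the two overall rate expressions I would appeal to the law of total expectation, partitioning the sample space according to which tier the typical user associates with. In the noncooperative model the events $A_m$ (association with the MBS) and $A_s$ (association with an SBS) are mutually exclusive and exhaustive, with $\mathbb{P}(A_s)=\mathcal{A}_{\mathrm{SBS}}$ from Corollary~\ref{corollary1} and $\mathbb{P}(A_m)=1-\mathcal{A}_{\mathrm{SBS}}$; hence $R_{nc}=\mathbb{P}(A_m)R_{A_m}+\mathbb{P}(A_s)R_{A_s}$, which is the claimed decomposition. An identical argument with the cooperative events $D_m,D_s$ and the cluster-association probability $\mathcal{A}_{\mathrm{SBS_{cl}}}$ of Proposition~\ref{PropAssoc} delivers $R_{c,cl}$.

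I do not anticipate a genuine obstacle, since every ingredient is either a standard identity or a previously established result. The only point requiring care is the consistency of the conditioning: I must ensure that the $P_{c,i}$ inserted into the tail integral is the coverage probability computed under the same association event that defines $\gamma_i$, so that the matching distance distribution from Proposition~\ref{PDFdistance} and the matching Laplace transform from Proposition~\ref{LaplaceTransform} (with the tier-dependent limits $w_m,w_s$) are used in each of the four terms. Provided this bookkeeping is carried out tier by tier, the theorem follows by direct substitution.
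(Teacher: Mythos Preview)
Your proposal is correct and matches the paper's own proof essentially step for step: the paper also starts from $R_i=\mathbb{E}[\log_2(1+\gamma_i)]$, applies the tail-integral identity for a nonnegative random variable, performs the change of variable to $\theta$, and then invokes the law of total probability over the mutually exclusive association events. The only cosmetic difference is that the paper first pulls out $1/\ln 2$ and works with $\mathbb{P}(\gamma_i>e^{\mathcal{T}}-1)$ before substituting, whereas you work directly with $2^t-1$; the resulting integral and the overall argument are identical.
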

\proof See Appendix~\ref{AverageAchievableRate1TheoremProof}.\endproof

  {\section{Numerical Results} \label{Numerical} 
In this section, we investigate and compare the traffic offloading performance in the case of cooperative and noncooperative models where the BSs have multiple antennas. Specifically, we shed light into the variation of the coverage probability and  rate provided by Theorems~\ref{theoremCoverageProbability},~\ref{theoremCoverageProbabilityCooperative}, and~\ref{AverageAchievableRate1Theorem}, respectively, with respect to the system parameters. Remarkably, we illustrate the outperformance of the cooperative cluster against the ``selfish'' noncooperative strategy\footnote{Note that the penalty will be an extra cost and complexity at both the transmit and the receive sides.}. Also, we confirm that it is better to serve a single user instead of serving multiple users e.g., by means of SISO or SU-BF. Actually, the employment of more antennas for multi-stream transmission is not always beneficial as the lines corresponding to SDMA transmission show. This work brings to light many new observations,  {while we consider practical values for the simulation parameters used in the literature~\cite{Dhillon2013,Jo2012,Gupta2014}}.  For example, with comparison to~\cite{Dhillon2013} we illustrate how the coverage probability varies with the SINR when the SBS cooperate as well as we have cell association.}

 {Having assumed that the typical user is found at the origin, we choose a sufficiently large area of $5~\mathrm{km}\times 5$ $\mathrm{km}$ including the (two) tiers of MBSs and SBSs. All the BSs have a wired connection with the SDN controller. The locations of the BSs in each tier are simulated as realizations of a PPP with given density. In fact, we assume $\lambda_{m}=0.01~\mathrm{m^{-2}}$ and $\lambda_{s}=0.04~\mathrm{m^{-2}}$ for the MBS and SBS tiers, respectively. Moreover, the users' PPP density in each tier is considered to be $\lambda_{k}=40 \lambda_s$. Regarding the number of cooperating SBSs consisting the cluster, we assume that it equals 2 because a higher number might be impractical. The path-loss for both tiers is set to $\al=3$. Also, the downlink transmit powers are $p_{m}=45~\mathrm{dBm}$ and $p_{s}=35~\mathrm{dBm}$ since a MBS normally operates in higher power than SBSs. In the case of cooperation, we assume $K=2$ SBSs. Continuing to the design of the setup, we consider $3$ antenna configurations holding simultaneously for both tiers and defining the corresponding transmission strategies~\footnote{Taking into account mixed transmission strategies, i.e., SDMA transmission for the first tier, and SISO for the second one, has been omitted due to limited space, and because the investigation of such cases is of minor importance.}. In particular, we have
\begin{itemize}
 \item SISO: $M_{m}=\Psi_{m}=1$ and $M_{s}=\Psi_{s}=1$
 \item SUBF: $M_{m}=8$, $\Psi_{m}=1$, $M_{s}=4$, $\Psi_{s}=1$
 \item SDMA: $M_{m}=\Psi_{m}=8$ and $M_{s}=\Psi_{s}=8$.
\end{itemize}}

 {The simulation of the metrics under study, i.e., the coverage probability and the mean achievable rate of the typical user necessitate the calculation of the received SINR in terms of the desired signal strength and interference power from each BS. For example, coverage of the user means that the received SINR from at least one of the BSs of both tiers exceeds a certain target. This procedure, being repeated $10^{4}$ times, provides the validation of our model, and demonstrates the performance of the system by varying its parameters. Actually, in the figures, we have depicted the proposed analytical expressions of the coverage probability $P_{c}\left( \mathcal{T},q \right)$ and the achievable user rate $R_{c}\left( \mathcal{T},q \right)$ by means of lines being ``solid'' and ``dotted'' when the cooperative and noncooperative architectures are deployed, respectively. In addition, we have included the corresponding simulated results designated by means of bullets. }

\begin{figure}[!h]
 \begin{center}
 \includegraphics[width=0.8\linewidth]{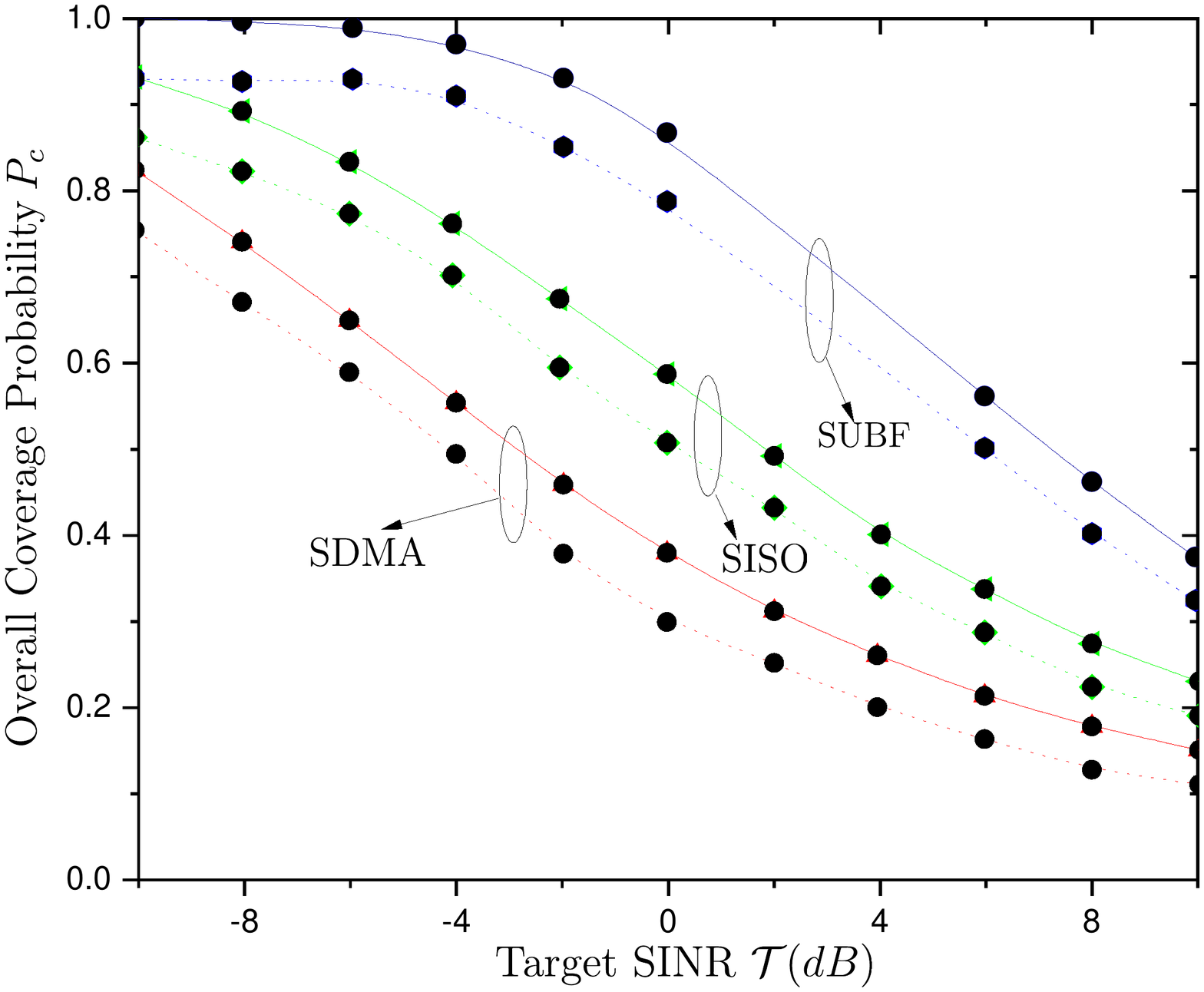}
 \caption{\footnotesize{Overall coverage probability of a MU-MIMO HetNet for varying transmission strategies versus the target SINR $\mathcal{T}$ for both noncooperative and cooperative scenarios,  {where $\lambda_{m}=0.01~\mathrm{m^{-2}}$,  $\lambda_{s}=0.04~\mathrm{m^{-2}}$, $\lambda_{k}=40 \lambda_s$, $p_{m}=45~\mathrm{dBm}$, $p_{s}=35~\mathrm{dBm}$, $M_{m}=\Psi_{m}=1$ and $M_{s}=\Psi_{s}=1$ (SISO), $M_{m}=8$, $\Psi_{m}=1$, $M_{s}=4$, $\Psi_{s}=1$ (SUBF), $M_{m}=\Psi_{m}=8$ and $M_{s}=\Psi_{s}=8$ (SDMA)}.}}
 \label{Fig1}
 \end{center}
 \end{figure}
 \begin{figure}[!h]
 \begin{center}
 \includegraphics[width=0.8\linewidth]{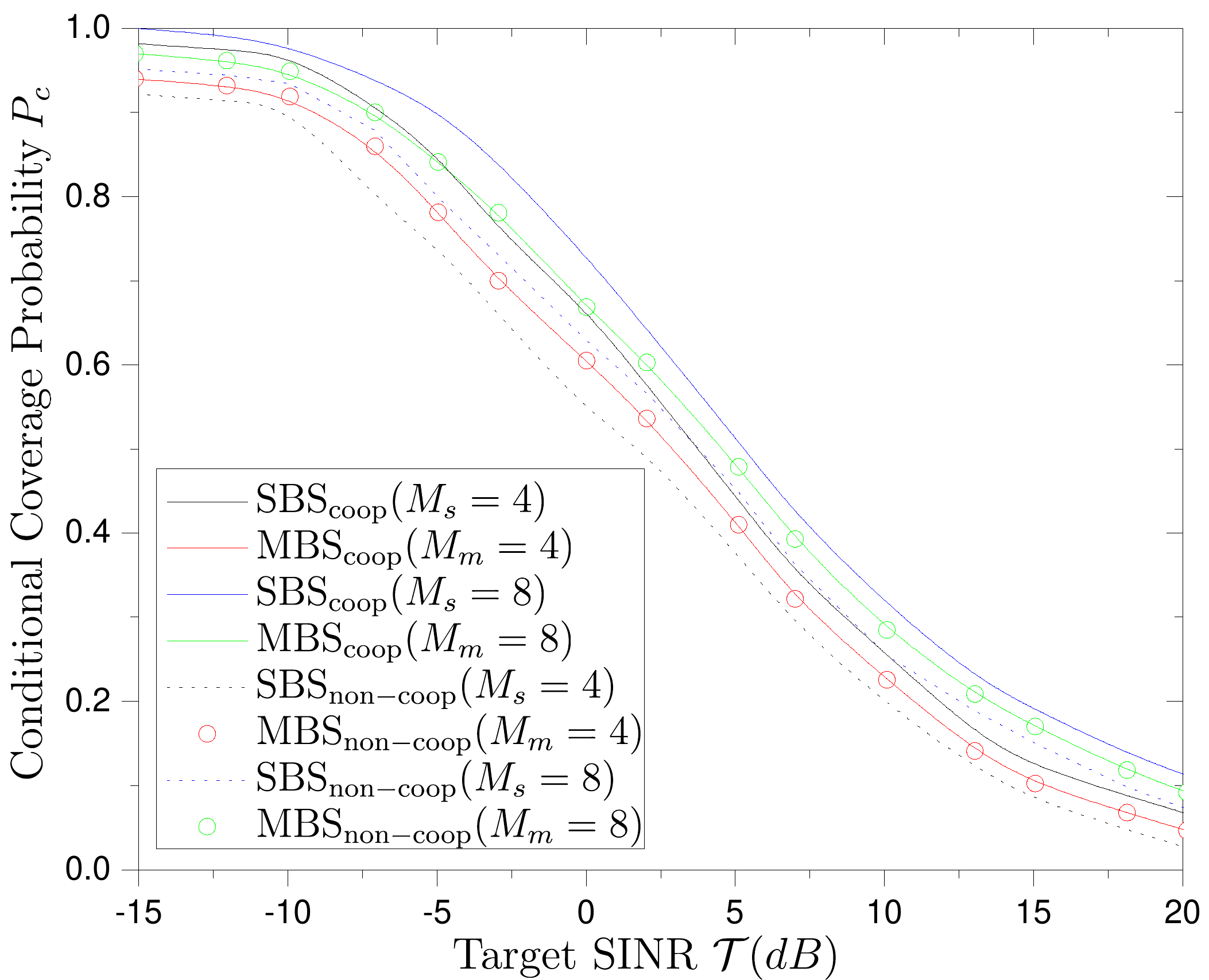}
 \caption{\footnotesize{Coverage probabilities of a MU-MIMO HetNet for varying number of BSs antennas versus the target SINR $\mathcal{T}$ for both noncooperative and cooperative scenarios,  {where $\lambda_{m}=0.01~\mathrm{m^{-2}}$,  $\lambda_{s}=0.04~\mathrm{m^{-2}}$, $\lambda_{k}=40 \lambda_s$, $p_{m}=45~\mathrm{dBm}$, $p_{s}=35~\mathrm{dBm}$.}}}
 \label{Fig2}
 \end{center}
 \end{figure}
 {Fig.~\ref{Fig1} depicts the overall probability of coverage with respect to the target SINR for different transmission strategies under the options of cooperation and noncooperation among the SBSs. Obviously, SDMA is not preferable with a comparison to SUBF because of the interfering users. Also, between SISO and SDMA, the latter is inferior because more interfering BSs exist. Also, SUBF appears the best coverage because of its inherent beamforming gain in addition to the proximity gain obtained by SISO. These observations come into agreement with simpler results known in the literature, e.g.,~\cite{Dhillon2013}. The new finding concerns that this behavior is met, while offloading takes place in both scenarios of SBS operation. Notably, these design attributes are achievable due to the introduction of SDN in the proposed system. Moreover, the SBS cluster behaves advantageously, while if SBSs do not cooperate we observe lower coverage. As a result, the scalability of the SDN controller in wireless networks contributes to increased coverage.}

 {In Fig.~\ref{Fig2}, we compare the conditional coverage probabilities corresponding to the events $A_{m}$ $A_{s}$, $D_{m}$, and $D_{s}$. In particular, when the SBSs do not collaborate, the coverage provided by the MBS and the single SBS would be identical if we had considered that the transmit powers $p_{m}$ and $p_{s}$ are equal, as you can see from~\eqref{theorem1_1} and~\eqref{theorem1_2}. In our scenario, we assume that the transmit powers are different ($p_{m}>p_{s}$). In such case, the typical user selects the BS providing the largest biased power. Herein, we also assume equal bias, hence the user selects the MBS. On the other hand, when the SBSs join forces thanks to the SDN controller, they enhance the coverage since their transmitted power is higher than the power emitted by the MBS. However, the coverage obtained by the MBS is the same with the noncooperative case. Moreover, the coverage is improved, if only the number of BS antennas increases due to higher beamforming gain. As a corollary, the increase of the BS antennas and SBS cooperation are indicated when better coverage is demanded.}

 {The maximization of the overall coverage with the aim to find the optimal selection bias is examined in Fig.~\ref{Fig3}. 
Especially, the figure demonstrates the overall probability of coverage versus the relative bias $B_{2}/B_{1}$ when the threshold SINR is $0$ dB. We show the maximization of $P_{c}$ for different transmission strategies, i.e., the SDMA provides the worst coverage due to the intra-cell interference, while SUBF behaves best. Interestingly, while focusing on SDMA and SUBF, the optimal bias moves to the left since the received signal from the SBS cluster is higher and allows lower bias for maximum coverage. Hence, higher preference is shown to the connection with the SBS tier. Notably, in the case of SBS cooperation, the optimal bias moves further right since the typical user selects the SBS cluster to communicate. Moreover, the gap between the dotted and solid lines, i.e., between the noncooperative and cooperative scenarios increases as $B_{2}$ increases over $B_{1}$.\footnote{  {Generally, BS cooperation among multiple BSs can improve the average rate for users at the cell edge. In such case, multiple BSs use the same radio resources to serve one cell-edge user. However, in the case of no BS cooperation, each BS could use these radio resources to serve one user. Hence, this topic could be investigated in terms of fairness but due to limited space, we leave its study for future research.}}}

 {Fig.~\ref{Fig4} depicts the impact of the MBS transmit power and SBS density on the overall coverage probability. As the MBS power $p_{m}$ increases, it would be expected that the user would connect with a MBS, however, the SBS cluster still provides better coverage due to the coordination by the SDN controller. Regarding the SBS density, we observe that higher density results in higher coverage probability since the connection between the typical user and the SBS cluster is more favorable. Similar observation holds for the noncooperative scenario because it is more likely that a SBS is closer to the user than a MBS. Especially, in higher SBS density the coverage probability saturates because after a certain large value of $\lambda_{s}$ over $\lambda_{m}$, the user will associate with the SBS or the SBS cluster in the noncooperative and cooperative cases, respectively with no higher possibility. In other words, any further increase in the SBS density will be fruitless.}
\begin{figure}[!h]
 \begin{center}
 \includegraphics[width=\linewidth]{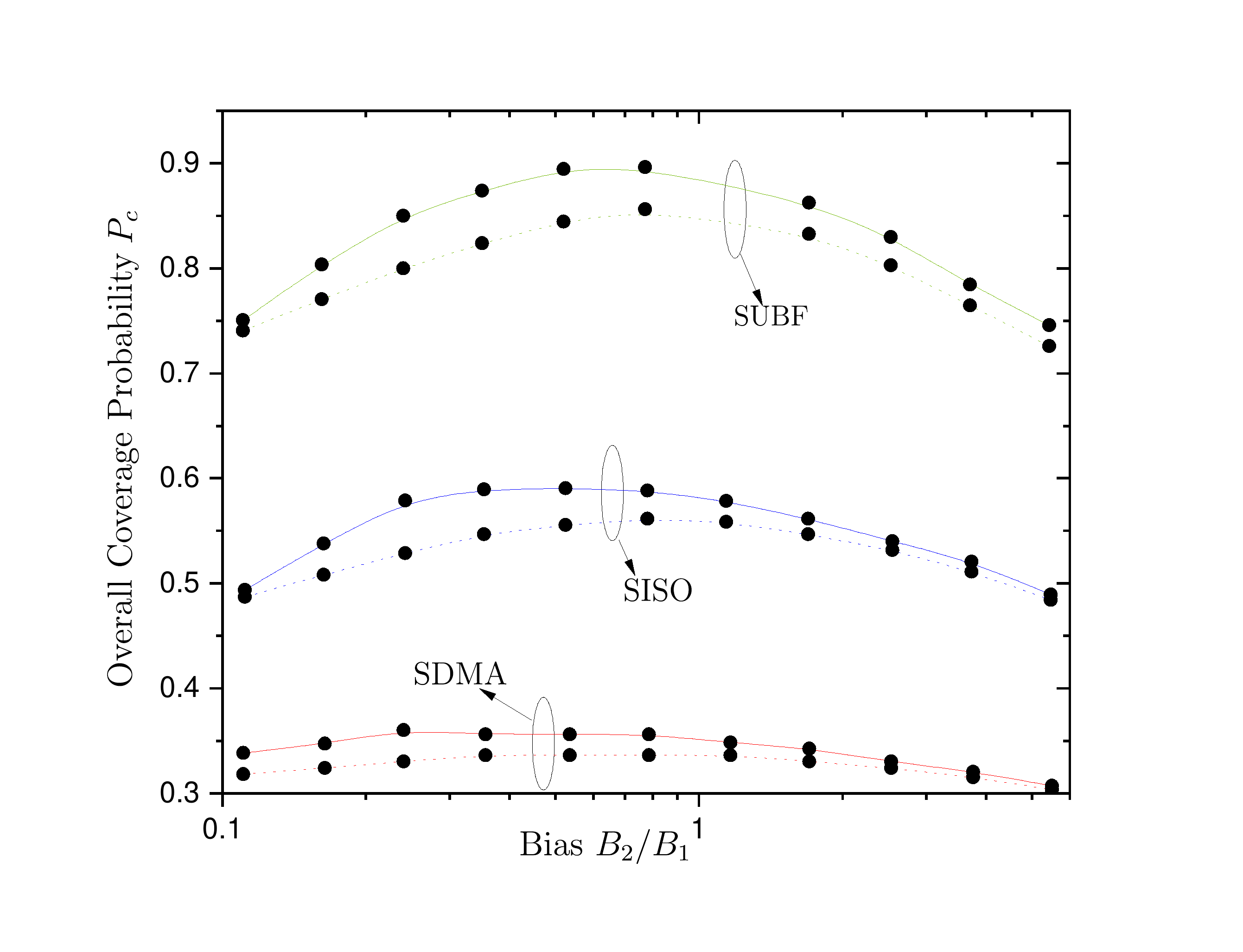}
 \caption{\footnotesize{Overall coverage probability of a MU-MIMO HetNet for varying transmission strategies versus the relative bias $B_{s}/B_{m}$ for both noncooperative and cooperative scenarios,  {where $\lambda_{m}=0.01~\mathrm{m^{-2}}$,  $\lambda_{s}=0.04~\mathrm{m^{-2}}$, $\lambda_{k}=40 \lambda_s$, $p_{m}=45~\mathrm{dBm}$, $p_{s}=35~\mathrm{dBm}$, $M_{m}=\Psi_{m}=1$ and $M_{s}=\Psi_{s}=1$ (SISO), $M_{m}=8$, $\Psi_{m}=1$, $M_{s}=4$, $\Psi_{s}=1$ (SUBF), $M_{m}=\Psi_{m}=8$ and $M_{s}=\Psi_{s}=8$ (SDMA)}.}}
 \label{Fig3}
 \end{center}
 \end{figure}
 \begin{figure}[!h]
 \begin{center}
 \includegraphics[width=0.8\linewidth]{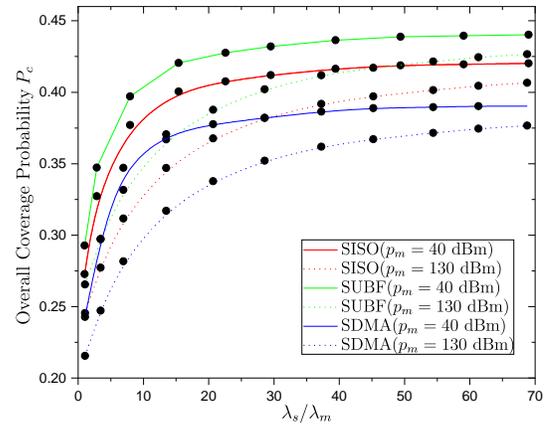}
 \caption{\footnotesize{Overall coverage probability of a MU-MIMO HetNet for varying transmission strategies versus the relative density $\lambda_{s}/ \lambda_{m}$ for both noncooperative and cooperative scenarios,  {where $\lambda_{k}=40 \lambda_s$,  $p_{s}=35~\mathrm{dBm}$, $M_{m}=\Psi_{m}=1$ and $M_{s}=\Psi_{s}=1$ (SISO), $M_{m}=8$, $\Psi_{m}=1$, $M_{s}=4$, $\Psi_{s}=1$ (SUBF), $M_{m}=\Psi_{m}=8$ and $M_{s}=\Psi_{s}=8$ (SDMA)}.}}
 \label{Fig4}
 \end{center}
 \end{figure}
 \begin{figure}[!h]
 \begin{center}
 \includegraphics[width=0.8\linewidth]{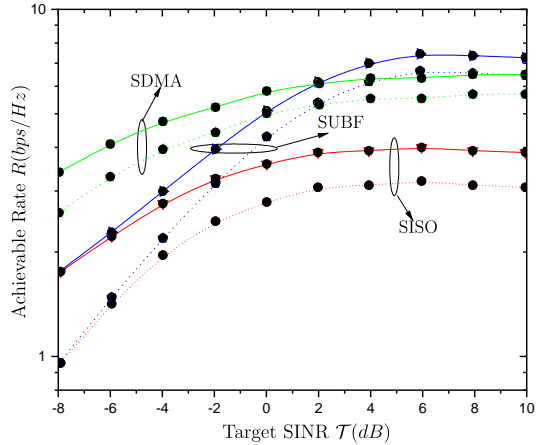}
 \caption{\footnotesize{Overall mean achievable rate of a MU-MIMO HetNet for varying transmission strategies versus the target SINR $\mathcal{T}$ for both noncooperative and cooperative scenarios,  {where $\lambda_{m}=0.01~\mathrm{m^{-2}}$,  $\lambda_{s}=0.04~\mathrm{m^{-2}}$, $\lambda_{k}=40 \lambda_s$, $p_{m}=45~\mathrm{dBm}$, $p_{s}=35~\mathrm{dBm}$, $M_{m}=\Psi_{m}=1$ and $M_{s}=\Psi_{s}=1$ (SISO), $M_{m}=8$, $\Psi_{m}=1$, $M_{s}=4$, $\Psi_{s}=1$ (SUBF), $M_{m}=\Psi_{m}=8$ and $M_{s}=\Psi_{s}=8$ (SDMA)}.}}
 \label{Fig5}
 \end{center}
 \end{figure}
 
 {Fig.~\ref{Fig5} shows a comparison between the cooperative and noncooperative designs in terms of the achievable rate while the SINR varies. In addition, different transmission strategies are explored under the different designs. The SISO transmission is not preferable concerning the achievable rate, and the SDMA technique saturates at high SINR due to interference, as expected. Furthermore, the SBS cooperation, managed by means of SDN, offers the advantage of a higher rate than the ``selfish'' strategy for all transmission techniques.}

 \section{Conclusion} \label{Conclusion} 
 In this paper, we investigated the downlink coverage probability and mean achievable rate of MIMO HetNets with flexible cell association, and mainly, the ability of SBSs to cooperate by means of SDN. Specifically, an SDN controller was introduced to the architecture to alleviate the burden of the system by undertaking tasks such as the cell association and SBS coordination. Embodying the benefits of BS cooperation, multiple-antenna transmission, and offloading, we derived the overall coverage probability and mean achievable rate. Numerical results, verified by Monte Carlo simulations, revealed that the offloading of users is enhanced when the SBSs cooperate. In addition, among the results, we showed that SBS cooperation increases the coverage probability and rate under different transmission techniques. Also, the system performance is improved by increasing the number of BS antennas due to higher beamforming gain, while SBSs cooperate. Finally, the SBS density is meaningful to be increased to a certain value, because extra increase has no benefit.
\begin{appendices}
\section{Proof of Proposition~\ref{PropRegion}}\label{Alpha}
The association region between the typical user and the
cooperative cluster and not with the MBS is obtained, if the following inequality is satisfied. Specifically, we have
\begin{align}
 \sum_{x_{i,s}\in \mathcal{C}}^{x_{K,s}}\bar{p}_{\mathrm{r}_{i,s}}B_{s}&>\bar{p}_{\mathrm{r}_{m}}B_{m} \\
 \sum_{x_{i,s}\in\mathcal{C}}^{x_{K,s}}p_{s}\Delta_{s}r_{i,s}^{-\al}B_{s}&> p_{m}\Delta_{m} r_{m}^{-\al_{m}}B_{m}\label{label1}\\
 r_{m}&>\left( \sum_{x_{i,s}\in\mathcal{C}}^{x_{K,s}}\hat{p}_{s}\hat{\Delta}_{s}\hat{B}_{s}r_{i,s}^{-\al_{s}} \right)^{\frac{1}{{\al}_{m}}},
\end{align}
where $\hat{p}_{s}={p}_{s}/p_{m}$, and similarly  $\hat{\Delta}_{s}={\Delta}_{s}/\Delta_{m}$, $\hat{B}_{s}={B}_{s}/B_{m}$. Note that in~\eqref{label1}, we have substituted~\eqref{meanPower}. Moreover, $r_{i,s}$ and $r_{m}$ express the distances from the typical user to its nearest $i$th closest SBS of the cluster  and MBS, respectively. 
\section{Proof of Proposition~\ref{PropAssoc}}\label{AlphaAssoc}
The user will associate with the tier offering the highest biased received power. In fact, the probability that the user will communicate with the MBS is $\displaystyle\mathbb{P}\left(\bar{p}_{\mathrm{r}_{m}}B_{m}>\sum_{x_{i,s}\in \mathcal{C}}^{x_{K,s}}\bar{p}_{\mathrm{r}_{i,s}}B_{s} \right)$, while its complement is the association probability with the SBS cluster. Specifically, we have
\begin{align}
&\mathcal{A}_{\mathrm{SBS_{cl}}}=1- \mathbb{P}\left(\bar{p}_{\mathrm{r}_{m}}B_{m}>\sum_{x_{i,s}\in \mathcal{C}}^{x_{K,s}}\bar{p}_{\mathrm{r}_{i,s}}B_{s} \right)\\&=1-\EE_{\br}\left[\mathbb{P}\left(r_{m}<\left(1\bigg / \sum_{x_{i,s}\in \mathcal{C}}^{x_{K,s}}\hat{p}_{s}\hat{\Delta}_{s}\hat{B}_{s}r_{i,s}^{-\al} \right)^{\frac{1}{\al}} \right) \right] \label{PropAssoc1}\\
&=\int \displaylimits_{\substack{0<r_{1,s}<r_{2,s}\ldots\\r_{K,s}<r_{ \infty,s}}}\!\!\!\!\!\!\!\!\mathbb{P}\!\left(\!r>\left(\frac{1}{ \hat{{p}}_{s}{\hat{\Delta}}_{s}{\hat{B}}_{s} \sum_{x_{i,s}\in \mathcal{C}}^{x_{K,s}}r_{i,s}^{-\al} } \right)^{\frac{1}{\al}} \right)\! f_{\Lambda}\!\left( \br \right)\mathrm{d}\br\label{PropAssoc2}\\
&=\int \displaylimits_{\substack{0<r_{1,s}<r_{2,s}\ldots\\r_{K,s}<r_{ \infty,s}}}\mathrm{e}^{\left( -\lambda_{m}\pi\eta^{2/\al} \right)}f_{\Lambda}\left( \br \right)\mathrm{d}\br,\label{PropAssoc3}
\end{align}
where in~\eqref{PropAssoc1} $\br=[r_{1,s},\ldots,r_{K-1,1},r_{K,1}]$ denotes the set of distances of the $K$ closest SBSs to the typical user, while in~\eqref{PropAssoc2} $f_{\Lambda}\left( \br \right)$ is the joint PDF of $r_{i,s}$. In the same equation, we have assumed that all SMBs have the same parameters. Next, $r_{m}>\eta ^{\frac{1}{\al}} $ with $\eta=1/\left( \hat{p}_{s}\hat{\Delta}_{s} \hat{B}_{s}\sum_{x_{i,s}\in\mathcal{C}}^{x_{K,s}}r_{i,s}^{-{\al}} \right)$ expresses that there is no SBS in the circle with radious $\eta ^{\frac{1}{\al}} $, and the corresponding probability is actually the null probability of a $2$-D homogeneous PPP. Given that the null probability of a $2$-D homogeneous PPP with density $\lambda$ in an area $A$ is $\mathrm{exp}^{-\lambda A}$, we obtain~\eqref{PropAssoc3}, which concludes the proof. 

\section{Proof of Corollary~\ref{corollary1}}\label{AlphaAssocCorollary1}
The proof starts similarly to the proof of~Proposition~\ref{PropAssoc}. Hence, we note that $\mathcal{A}_{\mathrm{SBS}}$ is obtained when the biased received power from the single SBS is greater than the power received by the MBS. Mathematically, we have
\begin{align}
\mathcal{A}_{\mathrm{SBS}}&= \mathbb{P}\left(\bar{p}_{\mathrm{r}_{s}}B_{s}>\bar{p}_{\mathrm{r}_{m}}B_{m} \right)\\
&=\EE_{r_{m}}\left[\mathbb{P}\left(r_{s}<\left(\hat{p}_{s}\hat{\Delta}_{s}\hat{B}_{s}r_{m}^{\al_{m}} \right)^{\frac{1}{\al}} \right)\right]\label{appC34} \\
&=\int_{0}^{\infty} \mathbb{P}\left(r_{s}<\left(\hat{p}_{s}\hat{\Delta}_{s}\hat{B}_{s}r^{\al_{m}} \right)^{\frac{1}{\al}}\right)f_{r_{m}}\left( r \right)\mathrm{d}r\label{AlphaAssocCor1}.
\end{align}
Taking into account the expression for the null probability of a $2$-D homogeneous PPP with density $\lambda_{s}$, we have
\begin{align}
 \mathbb{P}\left(r_{s}<\left(\hat{p}_{s}\hat{\Delta}_{s}\hat{B}_{s}r^{\al_{}} \right)^{\frac{1}{\al_{}}} \right)=1-\mathrm{e}^{-\lambda_{s}\pi r^{2} \left( \hat{p}_{s}\hat{\Delta}_{s}\hat{B}_{s}\right)^{\frac{2}{\al}}}.\label{AlphaAssocCor2} 
\end{align}
Furthermore, according to~\cite{Baumstark2007}, we have
\begin{align}
f_{r_{m}}\left( r \right)&=1-\frac{\mathrm{d}\mathbb{P}\left[ r_{m}>r\right] }{\mathrm{d}r}\nn\\
&=2\pi\lambda_{m}r\mathrm{e^{-\pi\lambda_{m}r^{2}}}.\label{AlphaAssocCor3} 
\end{align}
After substituting~\eqref{AlphaAssocCor2} and \eqref{AlphaAssocCor3} into~\eqref{AlphaAssocCor1}, we obtain the desired result by means of~\cite[Eq.~(3.321.4)]{Gradshteyn2007}.
\section{Proof of Proposition~\ref{PDFdistance}}\label{PDFdistanceProof}
The cumulative distribution function, or else the probability of the event of $R_{s}>r$ conditioned on $A_{s}$, is
\begin{align}
 \mathbb{P}\left[ R_{s}>r \right]&=\mathbb{P}\left[ r_{s}>r |A_{s}\right]\label{PDFdistanceProof0} \\
 &=\frac{\mathbb{P}\left[ r_{s}>r ,A_{s}\right] }{\mathbb{P}\left[ A_{s}\right] }\label{PDFdistanceProof1} \\
 &=\frac{\mathbb{P}\left[ r_{s}>r,\bar{p}_{\mathrm{r}_{s}}B_{s}>\bar{p}_{\mathrm{r}_{m}}D_{m}\right] }{\ \mathbb{P}\left[A_{s}\right] }\\
 &=\frac{\int_{r}^{\infty}\mathbb{P}\left[r_{m}>r \beta^{\frac{1}{\al_{}}}\right]f_{r_{s}}\left( r \right)\mathrm{d}r }{\mathcal{A}_{\mathrm{SBS}}},
 \end{align}
where $\beta=\left( \hat{p}_{s}\hat{\Delta}_{s}\hat{B}_{s}  \right)^{-1}$, and $\mathcal{A}_{\mathrm{SBS}}$ is given by~Corollary~\ref{corollary1}. Note that in~\eqref{PDFdistanceProof0} $r_{s}$ denotes the distance of the nearest SBS,  {while $A_{s}$ expresses the event that the user connects to a SBS}.  { In~\eqref{PDFdistanceProof1}, we have considered the definition of the conditional probabilty. }Moreover, $\mathbb{P}\left[r_{m}>r \beta^{\frac{1}{\al_{}}}\right]$ and $f_{r_{s}}\left( r \right)$ are obtained similarly to~\eqref{AlphaAssocCor2}
and~\eqref{AlphaAssocCor3}, respectively. Thus,~\eqref{PDFdistanceSBS} is obtained after differentiation with respect to $r$. The proof of~\eqref{PDFdistanceMBS1} follows the same steps, but the probability is derived conditioned on the event $A_{m}$.\\
In the cooperative scenario, when the user connects with the MBS, the corresponding probability $\mathbb{P}\left[ r_{m}>r |D_{m}\right]$ is obtained as
\begin{align}
 \mathbb{P}\left[ r_{m}>r |D_{m}\right]=\frac{\mathbb{P}\left[ r_{m}>r,D_{m}\right] }{\ \mathbb{P}\left[D_{m}\right] },
\end{align}
where $\mathbb{P}\left[ r_{m}>r,D_{m}\right] $ is given by
\begin{align}
\!\!\mathbb{P}\left[ r_{m}>r,D_{m}\right] &=\mathbb{P}\!\left[ r_{m}>r,\bar{p}_{\mathrm{r}_{m}}B_{m}>\!\sum_{x_{i,s}\in \mathcal{C}}^{x_{K,s}}\bar{p}_{\mathrm{r}_{i,s}}B_{s}\right]\\
&=\int_{r}^{\infty}\mathbb{P}\left[ r< \eta^{\frac{1}{\al}}\right]f_{r_{m}}\left( r \right)\mathrm{d}r,
\end{align}
with $\mathbb{P}\left[ r< \eta^{\frac{1}{\al}}\right]$ being the complement of the null probability of the $2$-D homogeneous PPP given by $\mathrm{e}^{\left( -\lambda_{m}\pi\eta^{2/\al} \right)}$. Moreover, $f_{r_{m}}\left( r \right)$ has already been derived in \eqref{AlphaAssocCor3}. The last step concerns the differentiation of $\mathbb{P}\left[ r_{m}>r,D_{m}\right] $.\\
In particular, in the interesting scenario, where the typical user is associated with the SBS cluster, we have that $\mathbb{P}\left[ r_{s}>r,D_{s}\right]$ is obtained as the integrated function in~\eqref{PropAssoc2}. By differentiating $\mathbb{P}\left[ r_{s}>r|D_{s}\right]$, the proof is concluded.

\section{Proof of Theorem~\ref{theoremCoverageProbability}}\label{theoremCoverageProbabilityProof}
In the case that the typical user is connected with the $i$th BS of the MBS tier, the SINR, after modifying appropriately \eqref{eq:SINR}, is written as
\begin{align}
\gamma_{m}=\frac{p_{m} h_{i,m} r_{i,m}^{-\al}}{\mathcal{I}+N},\label{eq:SINRproof1} 
\end{align}
where we have that $\mathcal{I}=\displaystyle\sum_{j=\{m,s\}}\sum_{i\in\Phi_{j}\backslash r_{i,j}}p_{j} g_{i,j} r_{i,j}^{-\al}$ describes the aggregate interferences from both tiers, which are independent. Hence, $\mathcal{I}$ can be written as a sum of the two interferences, i.e., $\mathcal{I}=\mathcal{I}_{m}+\mathcal{I}_{s}$, where $\mathcal{I}_{m}=\displaystyle\sum_{i\in\Phi_{m}\backslash r_{i,m}}p_{m} g_{i,m} r_{i,m}^{-\al}$ and $\mathcal{I}_{s}=\displaystyle\sum_{i\in\Phi_{s} r_{i,s}}p_{s} g_{i,s} r_{i,s}^{-\al}$. Note that $N=\sigma^{2}$ is the variance of the AWGN, being assumed to be identical across all tiers, i.e., $\sigma^{2}_{m}=\sigma^{2}_{s}=\sigma^{2}$.
According to the definition, given by~\eqref{defin}, we have
\begin{align}
 &P_{c,A_{m}}\left( \mathcal{T},q \right)=\mathbb{P}\left[ \gamma_{m}>\mathcal{T} \right]\label{coverage_definition0}\\
 &=\EE \!\left[\mathbb{P}\left[ \gamma_{m}>\mathcal{T}|x \right]\right]\label{coverage_definition1}\\
 &=\int_{r>0}\mathbb{P}\!\left[ h_{i,m} >p_{m}^{-1}\mathcal{T}r^{\al}\left( \mathcal{I}+N \right)|r\right] f_{R_{m}}\left( r \right)\mathrm{d}r
\end{align}
The left term of the integrand is written as
\begin{align}
 \mathbb{P}\![ h_{i,m} & >p_{m}^{-1}\mathcal{T}r^{\al}\left( \mathcal{I}+N \right)|r]\nn\\
&=\sum^{\Delta_{m}-1}_{k=0}\frac{1}{k!}\EE\left[\left[ -s\left( \mathcal{I}+N \right)\right]^{k}  \mathrm{e}^{-s\left( \mathcal{I}+N \right)}\right]\label{coverage_definition2}\\
 &=\sum^{\Delta_{m}-1}_{k=0}\frac{1}{k!}\left( -s \right)^{k}\frac{\mathrm{d}^{k}}{\mathrm{ds}^{k}}\mathcal{L}_{\mathcal{I}N}\{s\left( \mathcal{I}+N \right)\}\label{coverage_definition3}\\
 &=\sum^{\Delta_{m}-1}_{k=0}\frac{1}{k!}\left( -s \right)^{k}\frac{\mathrm{d}^{k}}{\mathrm{ds}^{k}}\left( \mathrm{e}^{-sN}\mathcal{L}_{\mathcal{I}}\{s \} \right).\label{coverage_definition4}
\end{align}
In~\eqref{coverage_definition2}, we have set that $s=p_{m}^{-1}\mathcal{T}r^{\al}$. Moreover, given that $h_{i,m}$ is Gamma distributed with shape $\Delta_{m}$ and scale $s \left( \mathcal{I}+N \right)$, we have taken into account that the corresponding Gamma CCDF with shape $s$ and scale $\theta$ is provided by~$\mathbb{P}_{h_{i,m}}\left( z \right)=\displaystyle\sum_{i=0}^{s-1}\frac{1}{i!}\left( \frac{z}{\theta} \right)^{i}\mathrm{e}^{\frac{-z}{\theta}}$. Next, in~\eqref{coverage_definition3}, we have applied the definition of the Laplace Transform $\mathbb{E}_{I }\left[ e^{-s I }\left( s I \right)^{i}\right]=s^{i}\mathcal{L}\{t^{i}g_{I }\left( t \right)\}\left( s \right)$ and the Laplace identity $t^{i}g_{I }\left( t \right)\longleftrightarrow \left( -1 \right)^{i}\frac{\mathrm{d}^{i}}{\mathrm{d}^{i}s}\mathcal{L}_{I }\{g_{I }\left( t \right)\}\left( s \right)$. Finally, the overall
coverage probability is obtain by means of application of the law of total probability while the association probabilities are independent.
\section{Proof of Proposition~\ref{LaplaceTransform}}\label{LaplaceTransformproof}
The interference $\mathcal{I}_{j}$ from the BSs of the $j$th tier, where $j=\{m,~s\}$, results by the BSs located outside the ball of radious $r_{j}$, i.e., $B\left( 0,r_{j} \right)$ with $r_{j}=\left(\hat{p}_{s}\hat{\Delta}_{s}\hat{B}_{s}x ^{-\al} \right)^{\frac{1}{{\al}}}$. The Laplace transform of the overall interference is obtained as
\begin{align}
&\!\!\!\mathcal{L}_{\mathcal{I}}=\EE\left[ \mathrm{e}^{-s\mathcal{I}}\right]\\
&=\EE\left[\exp\bigg\{-s \sum_{j\in \{s,m\}} \sum_{y\in\Phi_{j}\backslash B \left( 0,r_{i,j} \right)}p_{j} g_{y,j} \|y\|^{-\al}\bigg\}\right]\nn \\
&=\prod_{j\in \{s,m\}}\EE \left[ \prod_{y\in\Phi_{j}\backslash B \left( 0,r_{i,j} \right)}\exp\bigg\{-s p_{j} g_{y,j} \|y\|^{-\al}\bigg\}\right]\label{laplace1} \\
&=\!\!\!\!\prod_{j\in \{s,m\}}\!\!\!\!\EE \!\!\left[ \prod_{y\in\Phi_{j}\backslash B \left( 0,r_{i,j} \right)}\!\!\!\!\!\!\!\EE_{g_{y,j}}\!\left[ \exp\bigg\{\!-s p_{j} g_{y,j} \|y\|^{-\al}\bigg\}\right]\! \right]\label{laplace2} \\
&=\prod_{j\in \{s,m\}}\EE \left[ \prod_{y\in\Phi_{j}\backslash B \left( 0,r_{i,j} \right)}\frac{1}{\left( 1+s p_{j} \|y\|^{-\al} \right)^{\Psi_{j}}} \right]\label{laplace3}\\
&=\prod_{j\in \{s,m\}}\!\!\!\!\!\exp\bigg\{\!-\lambda_{j}\!\int_{\mathbb{R}^{2}}\!\!\left( \!1-\frac{1}{\left( 1+s p_{j} \|y\|^{-\al} \!\right)^{\Psi_{j}}} \right)\mathrm{d}r\bigg\}\label{laplace4} \\
&=\!\!\!\prod_{j\in \{s,m\}}\!\!\!\exp\bigg\{\!\!-2\pi\lambda_{j}\int_{r_{j}}^{\infty}\!\!\left( 1-\frac{1}{\left( 1+s p_{j} \|y\|^{-\al} \right)^{\Psi_{j}}} \right)r\mathrm{d}r\bigg\} \nn\\
&=\!\!\!\prod_{j\in \{s,m\}}\!\!\exp\bigg\{\!\!-2\pi\lambda_{j}\int_{r_{j}}^{\infty}\left( \frac{\displaystyle\sum_{i=1}^{\Psi_{j}}\binom{\Psi_{j}}{i}\left( s p_{j} r^{-\al} \right)^{i} }{\left( 1+s p_{j} r^{-\al} \right)^{\Psi_{j}}} \right)r\mathrm{d}r\bigg\}\nn \\
&=\prod_{j\in \{s,m\}}\exp\bigg\{\bigg.-\frac{2 \pi \lambda_{j} \left( sp_{j} \right)^{\frac{2}{a}}}{\al}\nn\\
 &~~~~~~~~~\times\!\!\bigg. \int_{w_{j}}^{1} \displaystyle \sum_{i=1}^{\Psi_{j}}\binom{\Psi_{j}}{i} t^{\Psi_{j}-1-i-\frac{2}{a}}\left( 1-t \right)^{i+\frac{2}{\al}-1}\mathrm{d}t\bigg\}\label{laplace5} \\
&=\prod_{j\in \{s,m\}}\exp\bigg\{\bigg.-\frac{2 \pi \lambda_{j} {\left( sp_{j} \right)}^{\frac{2}{a}}}{\al}\!\!\nn\\
&~~~~~~~~~\times \bigg.\sum_{i=1}^{\Psi_{j}}\!\!\binom{\Psi_{j}}{i}\mathrm{B}^{'}\!\left(\! \Psi_{j}\!+\!i\!-\!\frac{2}{a},i\!+\!\frac{2}{a}, w_{j}\right)\bigg\},
\end{align}
where~\eqref{laplace1} considers the independence among the locations of the BSs. Next,~\eqref{laplace2} results due to the independence between the spatial and the fading distributions, while in~\eqref{laplace3}, we have substituted the Laplace transform of $g_{y,j}$ following a Gamma distribution. We continue with the application of the property of the probability generating functional (PGFL)~\cite{Chiu2013a}, in order to obtain~\eqref{laplace4}. Moreover, in the following two equations, we convert the Cartesian coordinates to polar coordinates, and we apply the Binomial theorem. The calculation of the integral in~\eqref{laplace5} is obtained by means of many algebraic manipulations after making first the substitution $u=\left( s p_{j} \right)^{-\frac{1}{\al}} r $ and then $\left( 1+u^{-\al} \right)^{-1}\rightarrow t$. The last equation is obtained by using the definition of the incomplete beta function $B_{x}\left( p,q \right)$ defined in \cite[Eq.~(8.391)]{Gradshteyn2007}. Specifically, after defining $B_{x}^{'}\left( p,q \right)=\int_{x}^{1}t^{p-1}\left( 1-t \right)^{q-1}\mathrm{d}t$ as the complimentary incomplete Beta function, we obtain the desired result.



\section{Proof of Lemma~\ref{DerivativeOfLaplace}}\label{DerivativeOfLaplaceproof}
According to~\eqref{coverage_definition4}, the evaluation of the coverage probability demands the $k$th derivative of the Laplace transform of noise plus interference. By writting $\mathcal{L}_{\mathcal{I}N}\{s\left( \mathcal{I}+N \right)\}= \mathrm{e}^{-sN}\mathcal{L}_{\mathcal{I}}\{s \}$ as the composition $f\left( g\left( s \right) \right)$ with $f\left( x \right)$ and
\begin{align}
 g\left( s \right)&=2\pi \sum_{j\in \{s,m\}}^{2}\lambda_{j}\int_{w_{j}}^{\infty}\left( -1+\frac{1}{\left(1+s p_{j} \|y\|^{-\al}\right)^{\Psi_{j}}} \right)r\mathrm{d}r\nn\\
 &-sN.
 \end{align}
Application of Fa\`{a} di Bruno's formula provides the $k$th derivative. Specifically, Fa\`{a} di Bruno's formula is expressed as
\begin{align}
\frac{\mathrm{d}^{k}f\left( g\left( s \right) \right)}{\mathrm{d}s^{k}}&=\sum_{m=1}^{k}\sum_{\substack{i_{1}+i_{2}+\ldots+i_{k}=m\\
i_{1}+2i_{2}+\ldots+ik_{i}=k}
}f^{m}\left( g\left( x \right) \right)\frac{k!}{\prod_{j=1}^{k}j!^{i_{j}}i_{j}}\nn\\
&\times \prod_{j=1}^{k}\left( g^{\left( j \right)}\left( x \right) \right)^{i_{j}}.
\end{align}
Regarding the $j$th derivative of $g\left( s \right)$, we have
\begin{align}
 g^{j}\left( s \right)&=-N\mathbbm{1}_{j=1}+2\pi\sum_{i=1}^{2}\lambda_i\left( -1 \right)^{j}\frac{\left( \Psi_{i}+j-1 \right)~} {\left(\Psi_{i}-1 \right)!}\nn\\
 &\times \int_{w_{i}}^{\infty} \frac{r^{1-j\al}}{\left(1+s p_{i} \|y\|^{-\al}\right)^{\Psi_{i}+j}} \mathrm{d}r\nn\\
 &=-N\mathbbm{1}_{j=1}+2\pi\sum_{i=1}^{2}\left( -1 \right)^{j}D_{i}\left( j \right)p_{i}^{j}\left( s p_{i} \right)^{\frac{2}{\al}-j},
\end{align}
where the second equation is obtained similarly to the last two steps of the derivation of the Laplace transform in~Appendix~\ref{LaplaceTransformproof}.
\section{Proof of Theorem~\ref{AverageAchievableRate1Theorem}}\label{AverageAchievableRate1TheoremProof}
Following a standard procedure, the mean achievable rate of the typical user in the case of the $i$th event is obtained by means of its definition as
\begin{align}
 R_{i}&=\EE\left[\log_{2}\left( 1+\gamma_{i} \right) \right] \nn\\
 &=\frac{1}{\ln 2}\EE\left[\ln\left( 1+\gamma_{i} \right) \right]\nn\\
 &=\frac{1}{\ln 2}\int_{0}^{\infty}\mathbb{P}\left( \gamma_{i} >\mathrm{e}^{\mathcal{T}}-1\right)\mathrm{d}\mathcal{T}\nn\\
 &=\frac{1}{\ln 2}\int_{0}^{\infty}\mathbb{P}\left( \gamma_{i} >\theta\right)\frac{1}{1+\theta}\mathrm{d}\theta, 
\end{align}
where the last step concludes the proof by making a change of variables and applying the definition of the coverage probability.
Taking into account that the association events between the SBS and the MBS are mutually exclusive, the theorem is proved by means of the law of total probability. 
\end{appendices}

\bibliographystyle{IEEEtran}

\bibliography{mybib}

\begin{thebibliography}{10}
\providecommand{\url}[1]{#1}
\csname url@samestyle\endcsname
\providecommand{\newblock}{\relax}
\providecommand{\bibinfo}[2]{#2}
\providecommand{\BIBentrySTDinterwordspacing}{\spaceskip=0pt\relax}
\providecommand{\BIBentryALTinterwordstretchfactor}{4}
\providecommand{\BIBentryALTinterwordspacing}{\spaceskip=\fontdimen2\font plus
\BIBentryALTinterwordstretchfactor\fontdimen3\font minus
  \fontdimen4\font\relax}
\providecommand{\BIBforeignlanguage}[2]{{%
\expandafter\ifx\csname l@#1\endcsname\relax
\typeout{** WARNING: IEEEtran.bst: No hyphenation pattern has been}%
\typeout{** loaded for the language `#1'. Using the pattern for}%
\typeout{** the default language instead.}%
\else
\language=\csname l@#1\endcsname
\fi
#2}}
\providecommand{\BIBdecl}{\relax}
\BIBdecl

\bibitem{Andrews2014}
J.~G. Andrews \emph{et~al.}, ``What will 5{G} be?'' \emph{IEEE J. Sel. Areas
  Commun.}, vol.~32, no.~6, pp. 1065--1082, June 2014.

\bibitem{Ericsson2015}
A.~Ericsson, ``Ericsson mobility report: {O}n the pulse of the networked
  society,'' \emph{Ericsson, Sweden, Tech. Rep. EAB-14}, vol. 61078, 2015.

\bibitem{Andrews2013}
J.~G. Andrews, ``Seven ways that {HetNets} are a cellular paradigm shift,''
  \emph{IEEE Commun. Mag.}, vol.~51, no.~3, pp. 136--144, Mar. 2013.

\bibitem{Kamel2016}
M.~Kamel, W.~Hamouda, and A.~Youssef, ``Ultra-dense networks: {A} survey,''
  \emph{IEEE Commun. Surveys Tuts.}, vol.~18, no.~4, pp. 2522--2545, 2016.

\bibitem{Andrews2012}
J.~G. Andrews, H.~Claussen, M.~Dohler, S.~Rangan, and M.~C. Reed, ``Femtocells:
  {P}ast, present, and future,'' \emph{IEEE J. Sel. Areas Commun}, vol.~30,
  no.~3, pp. 497--508, Apr. 2012.

\bibitem{Damnjanovic2011}
A.~Damnjanovic \emph{et~al.}, ``A survey on {3GPP} heterogeneous networks,''
  \emph{IEEE Wireless Commun.}, vol.~18, no.~3, June 2011.

\bibitem{Sankaran2012}
C.~Sankaran, ``Data offloading techniques in {3GPP Rel-10 networks: A}
  tutorial,'' \emph{IEEE Commun. Mag.}, vol.~50, no.~6, June 2012.

\bibitem{Andrews2011}
J.~G. Andrews, F.~Baccelli, and R.~K. Ganti, ``A tractable approach to coverage
  and rate in cellular networks,'' \emph{IEEE Trans. Commun.}, vol.~59, no.~11,
  pp. 3122--3134, Nov. 2011.

\bibitem{Dhillon2012}
H.~S. Dhillon, R.~K. Ganti, F.~Baccelli, and J.~G. Andrews, ``Modeling and
  analysis of {K}-tier downlink heterogeneous cellular networks,'' \emph{IEEE
  J. Sel. Areas Commun.}, vol.~30, no.~3, pp. 550--560, Mar. 2012.

\bibitem{Dhillon2013}
H.~S. Dhillon, M.~Kountouris, and J.~G. Andrews, ``Downlink {MIMO} {H}et{N}ets:
  {M}odeling, ordering results and performance analysis,'' \emph{IEEE Trans.
  Wireless Commun.}, vol.~12, no.~10, pp. 5208--5222, Oct. 2013.

\bibitem{Li2016}
C.~Li, J.~Zhang, J.~G. Andrews, and K.~B. Letaief, ``Success probability and
  area spectral efficiency in multiuser {MIMO HetNet}s,'' \emph{IEEE Trans.
  Commun.}, vol.~64, no.~4, pp. 1544--1556, Feb. 2016.

\bibitem{DiRenzo2016}
M.~Di~Renzo and P.~Guan, ``Stochastic geometry modeling and system-level
  analysis of uplink heterogeneous cellular networks with multi-antenna base
  stations,'' \emph{IEEE Trans. Commun.}, vol.~64, no.~6, pp. 2453--2476, Apr.
  2016.

\bibitem{Papazafeiropoulos2018}
A.~Papazafeiropoulos and T.~Ratnarajah, ``Modeling and performance of uplink
  cache-enabled massive {MIMO} heterogeneous networks,'' \emph{accepted in IEEE
  Trans. Wireless Commun.}, 2018.

\bibitem{PapazafComLetter2016}
A.~K. Papazafeiropoulos and T.~Ratnarajah, ``Downlink {MIMO} {HCN}s with
  residual transceiver hardware impairments,'' \emph{IEEE Commun. Let.},
  vol.~20, no.~10, pp. 2023--2026, Oct. 2016.

\bibitem{Papazafeiropoulos2017}
A.~Papazafeiropoulos and T.~Ratnarajah, ``Towards a realistic assessment of
  multiple antenna {HCNs}: {R}esidual additive transceiver hardware impairments
  and channel aging,'' \emph{IEEE Trans. Veh. Technol.}, vol.~66, no.~10, pp.
  9061--9073, Oct. 2017.

\bibitem{Kountouris2012}
M.~Kountouris and J.~G. Andrews, ``Downlink {SDMA} with limited feedback in
  interference-limited wireless networks,'' \emph{IEEE Trans. Wireless
  Commun.}, vol.~11, no.~8, pp. 2730--2741, Aug. 2012.

\bibitem{Jo2012}
H.-S. Jo, Y.~J. Sang, P.~Xia, and J.~G. Andrews, ``Heterogeneous cellular
  networks with flexible cell association: {A comprehensive downlink SINR}
  analysis,'' \emph{IEEE Trans. Wireless Commun.}, vol.~11, no.~10, pp.
  3484--3495, Oct. 2012.

\bibitem{Gupta2014}
A.~K. Gupta, H.~S. Dhillon, S.~Vishwanath, and J.~G. Andrews, ``Downlink
  multi-antenna heterogeneous cellular network with load balancing,''
  \emph{IEEE Trans. Commun.}, vol.~62, no.~11, pp. 4052--4067, Nov. 2014.

\bibitem{Andrews2014a}
J.~G. Andrews, S.~Singh, Q.~Ye, X.~Lin, and H.~S. Dhillon, ``An overview of
  load balancing in {HetNets: O}ld myths and open problems,'' \emph{IEEE
  Wireless Commun.}, vol.~21, no.~2, pp. 18--25, Apr. 2014.

\bibitem{Liu2016}
D.~Liu \emph{et~al.}, ``User association in {5G networks: A }survey and an
  outlook,'' \emph{IEEE Commun. Surveys Tuts.}, vol.~18, no.~2, pp. 1018--1044,
  2016.

\bibitem{Simeone2009}
O.~Simeone, O.~Somekh, H.~V. Poor, and S.~Shamai, ``Local base station
  cooperation via finite-capacity links for the uplink of linear cellular
  networks,'' \emph{IEEE Trans. Inf. Theory}, vol.~55, no.~1, pp. 190--204,
  Jan. 2009.

\bibitem{Gesbert2010}
D.~Gesbert, S.~Hanly, H.~Huang, S.~S. Shitz, O.~Simeone, and W.~Yu,
  ``Multi-cell {MIMO cooperative networks: A }new look at interference,''
  \emph{IEEE J. Sel. Areas Commun.}, vol.~28, no.~9, pp. 1380--1408, Dec. 2010.

\bibitem{Irmer2011}
R.~Irmer \emph{et~al.}, ``Coordinated multipoint: {Con}cepts, performance, and
  field trial results,'' \emph{IEEE Commun. Mag.}, vol.~49, no.~2, pp.
  102--111, Feb. 2011.

\bibitem{AlHaija2017}
A.~A. Al~Haija and C.~Tellambura, ``Small-macro cell cooperation for hetnet
  uplink transmission: {Spectral }efficiency and reliability analyses,''
  \emph{IEEE J. Sel. Areas Commun.}, vol.~35, no.~1, pp. 118--135, Jan. 2017.

\bibitem{Xia2015a}
W.~Xia, Y.~Wen, C.~H. Foh, D.~Niyato, and H.~Xie, ``A survey on
  software-defined networking,'' \emph{IEEE Commun. Surveys Tuts.}, vol.~17,
  no.~1, pp. 27--51, Firstquarter 2015.

\bibitem{Haque2016}
I.~T. Haque and N.~Abu-Ghazaleh, ``Wireless software defined networking: {A}
  survey and taxonomy,'' \emph{IEEE Commun. Surveys Tuts.}, vol.~18, no.~4, pp.
  2713--2737, Fourthquarter 2016.

\bibitem{Bera2017}
S.~Bera, S.~Misra, and A.~V. Vasilakos, ``Software-defined networking for
  {Internet of Things: A} survey,'' \emph{IEEE Internet of Things J.}, vol.~4,
  no.~6, pp. 1994--2008, Dec. 2017.

\bibitem{Ameigeiras2015}
P.~Ameigeiras, J.~J. Ramos-Munoz, L.~Schumacher, J.~Prados-Garzon,
  J.~Navarro-Ortiz, and J.~M. Lopez-Soler, ``Link-level access cloud
  architecture design based on {SDN} for{ 5G} networks,'' \emph{IEEE Netw.},
  vol.~29, no.~2, pp. 24--31, Mar. 2015.

\bibitem{Nunes2014}
B.~A.~A. Nunes, M.~Mendonca, X.-N. Nguyen, K.~Obraczka, and T.~Turletti, ``A
  survey of software-defined networking: {Pa}st, present, and future of
  programmable networks,'' \emph{IEEE Commun. Surveys Tuts.}, vol.~16, no.~3,
  pp. 1617--1634, Third Quarter 2014.

\bibitem{Arslan2015}
M.~Y. Arslan, K.~Sundaresan, and S.~Rangarajan, ``Software-defined networking
  in cellular radio access networks:{ Potential} and challenges,'' \emph{IEEE
  Commun. Mag.}, vol.~53, no.~1, pp. 150--156, Jan. 2015.

\bibitem{Sagar2016}
V.~Sagar, R.~Chandramouli, and K.~Subbalakshmi, ``Software defined access for
  {HetNets},'' \emph{IEEE Commun. Mag.}, vol.~54, no.~1, pp. 84--89, Jan. 2016.

\bibitem{Rawat2017}
D.~B. Rawat and S.~R. Reddy, ``Software defined networking architecture,
  security and energy efficiency:{ A} survey,'' \emph{IEEE Commun. Surveys
  Tuts.}, vol.~19, no.~1, pp. 325--346, Firstquarter 2017.

\bibitem{Kitindi2017}
E.~J. Kitindi, S.~Fu, Y.~Jia, A.~Kabir, and Y.~Wang, ``Wireless network
  virtualization with {SDN and C-RAN for 5G networks: R}equirements,
  opportunities, and challenges,'' \emph{IEEE Access}, vol.~5, pp.
  19\,099--19\,115, Sep. 2017.

\bibitem{Han2016b}
T.~Han and N.~Ansari, ``A traffic load balancing framework for software-defined
  radio access networks powered by hybrid energy sources,'' \emph{IEEE/ACM
  Trans. Netw.}, vol.~24, no.~2, pp. 1038--1051, Apr. 2016.

\bibitem{Zhang2017}
L.~Zhang, M.~Xiao, G.~Wu, M.~Alam, Y.-C. Liang, and S.~Li, ``A survey of
  advanced techniques for spectrum sharing in 5{G} networks,'' \emph{IEEE
  Wireless Commun.}, vol.~24, no.~5, pp. 44--51, 2017.

\bibitem{Han2016}
T.~Han \emph{et~al.}, ``Small cell offloading through cooperative communication
  in software-defined heterogeneous networks,'' \emph{IEEE Sensors J.},
  vol.~16, no.~20, pp. 7381--7392, Oct. 2016.

\bibitem{Tanbourgi2014}
R.~Tanbourgi, S.~Singh, J.~G. Andrews, and F.~K. Jondral, ``A tractable model
  for noncoherent joint-transmission base station cooperation,'' \emph{IEEE
  Trans. Wireless Commun.s}, vol.~13, no.~9, pp. 4959--4973, 2014.

\bibitem{Qiao2013}
P.~Qiao, Y.~Zhong, and W.~Zhang, ``Base station cooperation for energy
  efficiency: {A Gauss-Poisson} process approach,'' in \emph{2013 Asia-Pacific
  Signal and Information Processing Association Annual Summit and Conference
  (APSIPA)}.\hskip 1em plus 0.5em minus 0.4em\relax IEEE, 2013, pp. 1--7.

\bibitem{Nie2014}
W.~Nie, X.~Wang, F.-C. Zheng, and W.~Zhang, ``Energy-efficient base station
  cooperation in downlink heterogeneous cellular networks,'' in \emph{2014 IEEE
  Global Communications Conference (GLOBECOM)}.\hskip 1em plus 0.5em minus
  0.4em\relax IEEE, 2014, pp. 1779--1784.

\bibitem{Arslan2014}
Z.~Arslan, M.~Erel, Y.~{\"O}zcevik, and B.~Canberk, ``{SDoff}: {A}
  software-defined offloading controller for heterogeneous networks,'' in
  \emph{2014 IEEE Wireless Communications and Networking Conference
  (WCNC)}.\hskip 1em plus 0.5em minus 0.4em\relax IEEE, 2014, pp. 2827--2832.

\bibitem{Chen2016}
M.~Chen, Y.~Qian, S.~Mao, W.~Tang, and X.~Yang, ``Software-defined mobile
  networks security,'' \emph{Mobile Networks and Applications}, vol.~21, no.~5,
  pp. 729--743, 2016.

\bibitem{Li2015a}
Y.~Li, F.~Zheng, M.~Chen, and D.~Jin, ``A unified control and optimization
  framework for dynamical service chaining in software-defined {NFV} system,''
  \emph{IEEE Wireless Commun.}, vol.~22, no.~6, pp. 15--23, 2015.

\bibitem{Lee2012}
D.~Lee \emph{et~al.}, ``Coordinated multipoint transmission and reception in
  {LTE-advanced: Deployment} scenarios and operational challenges,'' \emph{IEEE
  Commun. Mag.}, vol.~50, no.~2, 2012.

\bibitem{Chiu2013a}
S.~N. Chiu, D.~Stoyan, W.~S. Kendall, and J.~Mecke, \emph{Stochastic geometry
  and its applications}.\hskip 1em plus 0.5em minus 0.4em\relax John Wiley \&
  Sons, 2013.

\bibitem{Wang2016}
K.~Wang, H.~Li, F.~R. Yu, and W.~Wei, ``Virtual resource allocation in
  software-defined information-centric cellular networks with device-to-device
  communications and imperfect{ CSI},'' \emph{IEEE Trans.n Veh. Tech.},
  vol.~65, no.~12, pp. 10\,011--10\,021, 2016.

\bibitem{Li2015}
Y.~Li and M.~Chen, ``Software-defined network function virtualization: {A}
  survey,'' \emph{IEEE Access}, vol.~3, pp. 2542--2553, Dec. 2015.

\bibitem{Baumstark2007}
V.~Baumstark and G.~Last, ``Some distributional results for {Poisson-V}oronoi
  tessellations,'' \emph{Advances in applied probability}, vol.~39, no.~1, pp.
  16--40, 2007.

\bibitem{Gradshteyn2007}
I.~S. Gradshteyn and I.~M. Ryzhik, ``Table of integrals, series, and
  products,'' \emph{Alan Jeffrey and Daniel Zwillinger (eds.), Seventh edition
  (Feb 2007)}, vol. 885, 2007.

\end{thebibliography}
\end{document}